\newtheorem{thm}{Theorem}[section]
\newtheorem{lem}{Lemma}[section]
\begin{document}

\begin{frontmatter}



\title{An efficient parallel algorithm for the longest path problem in meshes}

\address[label1]{Corresponding author: fatemeh.keshavarz@aut.ac.ir
}
\address[label2]{ ar\_bagheri@aut.ac.ir}

\author{Fatemeh Keshavarz-Kohjerdi \fnref{label1}}
\address{Department of Computer Engineering,\\ Islamic Azad University,
    North Tehran Branch, Tehran, Iran.\\}

\author{Alireza Bagheri \fnref{label2}}

\address{Department of Computer Engineering \& IT,\\ Amirkabir University of Technology, Tehran, Iran.}

\begin{abstract}
In this paper, first we give a sequential linear-time algorithm for
the longest path problem in meshes. This algorithm can be considered
as an improvement of \cite{FAA:ALAFFLPIRGG}. Then based on this
sequential algorithm, we present a constant-time parallel algorithm
for the problem which can be run on every parallel machine.
\end{abstract}

\begin{keyword} grid graph, longest path
\sep meshes \sep sequential and parallel algorithms.

MSC: $05C45$; $05C85$; $05C38$.

\end{keyword}

\end{frontmatter}




\section{Introduction}\label{IntroSect}
The longest path problem, i.e. the problem of finding a simple path
with the maximum number of vertices, is one of the most important
problems in graph theory. The well-known NP-complete Hamiltonian
path problem, i.e. deciding whether there is a simple path that
visits each vertex of the graph exactly once, is a special case of
the longest path problem and has many applications \cite{D:GT,
GJ:CAI}. \par Only few polynomial-time algorithms are known for the
longest path problem for special classes of graphs. This problem for
trees began with the work of Dijkstra around 1960, and was followed
by other people \cite{BSZVGF:OCALPIAT, G:FALPIACMD, LMN:TLPPIPOIG,
MC:ASPAFTLPPOCG, UU:OCLPISGC}.
In the area of approximation
algorithms it has been shown that the problem is not in APX, i.e.
there is no polynomial-time approximation algorithm with constant
factor for the problem unless P=NP \cite{G:FALPIACMD}. Also, it has
been shown that finding a path of length $n-n^{\epsilon}$ is not
possible in polynomial-time unless P=NP \cite{KMR:OATLPIAG}. For the
backgrround and some known result about approximation algorithms, we
refer the reader to \cite{BH:FAPOSL, GN:FLPCAC, ZL:AFLPIG}.\par A
grid graph is a graph in which vertices lie only on integer
coordinates and edges connect vertices that are separated by a
distance of once. A solid grid graph is a grid graph without holes.
The rectangular grid graph $R(n,m)$ is the subgraph of $G^\infty$
(infinite grid graph) induced by $V(m,n)= \{\upsilon \ |\ 1 \leq
v_{x}\leq m, \ 1\leq v_{y}\leq n\}$, where $v_{x}$ and $v_{y}$ are
respectively $x$ and $y$ coordinates of $v$ (see Figure ~\ref{bb}).
A mesh $M(m,n)$ is a rectangular grid graph $R(m,n)$. Grid graphs
can be useful representation in many applications. Myers \cite{M}
suggests modeling city blocks in which street intersection are
vertices and streets are edges. Luccio and Mugnia \cite{LM:HPOARC}
suggest using a grid graph to represent a two-dimensional array type
memory accessed by a read/write head moving up, down or across. The
vertices correspond to the center of each cell and edges connect
adjacent cells. Finding a path in the grid corresponds to accessing
all the data.\par Itai \textit{et al.} \cite{IPS:HPIGG} have shown
that the Hamiltonian path problem for general grid graphs, with or
without specified endpoints, is NP-complete. The problem for
rectangular grid graphs, however, is in P requiring only
linear-time. Later, Chen \textit{et al.} \cite{CST:AFAFCHPIM}
improved the algorithm of \cite{IPS:HPIGG} and presented a parallel
algorithm for the problem in mesh architecture. There is a
polynomial-time algorithm for finding Hamiltonian cycle in solid
grid graphs \cite{LU:HCISGG}. Also, the authors in \cite{CT:HPOGG}
presented sufficient conditions for a grid graph to be Hamiltonian
and proved that all finite grid graphs of positive width have
Hamiltonian line graphs.\par Recently the Hamiltonian cycle (path)
and longest path problem of a grid graph has received much
attention. Salman \textit{et al.} \cite{AEB:SSAG} introduced a
family of grid graphs, i.e. alphabet grid graphs, and determined
classes of alphabet grid graphs that contain Hamiltonian cycles.
Islam \textit{et al.} \cite{Imnrx:hcihgg} showed that the
Hamiltonian cycle problem in hexagonal grid graphs is NP-complete.
Also, Gordon \textit{et al.} \cite{vyf:hpotgg} proved that all
connected, locally connected triangular grid graphs are Hamiltonian,
and gave a sufficient condition for a connected graph to be fully
cycle extendable and also showed that the Hamiltonian cycle problem
for triangular grid graphs is NP-complete.\par Moreover, Zhang and
Liu \cite{wqz} gave an approximation algorithm for the longest path
problem in grid graphs and their algorithm runs in quadratic time.
Also the authors in \cite{FAA:ALAFFLPIRGG} has been studied the
longest path problem for rectangular grid graphs and their algorithm
is based on divide and conquer technique and runs in linear time.
Some results of the grid graphs are investigated in
\cite{kb:hpiscogg, 6}. \par In this paper, we present a sequential
and a parallel algorithms for finding longest paths between two
given vertices in rectangular grid graphs (meshes). Our algorithm
has improved the previous algorithm \cite{FAA:ALAFFLPIRGG} by
reducing the number of partition steps from $O(m+n)$ to only a
constant.
\par The organization of the paper as follow: In Section 2, we review some necessary definitions
and results that we will need. A sequential algorithm for the
longest path problem is given in Section 3. In Section 4, a parallel
algorithm for the problem is introduced which is based on the
mentioned sequential algorithm. Conclusions is given in Section 5.
\begin{figure}[tb]
  \centering
  \includegraphics[scale=1]{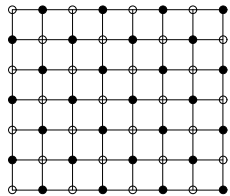}
  \caption[]%
  {\small The rectangular grid graph $R(8,7)$.}
\label{bb}
\end{figure}
\section{Preliminary results}
In this section, we give a few definitions and introduce the
corresponding notations. We then gather some previously established
results on the Hamiltonian and the longest path problems
in grid graphs which have been presented in \cite{CST:AFAFCHPIM, IPS:HPIGG, FAA:ALAFFLPIRGG}.\\
The \textit{two-dimensional integer grid} $G^\infty$ is an infinite
graph with vertex set of all the points of the Euclidean plane with
integer coordinates. In this graph, there is an edge between any two
vertices of unit distance. For a vertex $v$ of this graph, let
$v_{x}$ and $v_{y}$ denote $x$ and $y$ coordinates of its
corresponding point (sometimes we use $(v_x,v_y)$ instead of $v$).
We color the vertices of the two-dimensional integer grid as black
and white. A vertex $\upsilon$ is colored \textit{white} if
$\upsilon_{x}+\upsilon_{y}$ is even, and it is colored
\textit{black} otherwise. A \textit{grid graph} $G_{g}$ is a finite
vertex-induced subgraph of the two-dimensional integer grid. In a
grid graph $G_{g}$, each vertex has degree at most four. Clearly,
there is no edge between any two vertices of the same color.
Therefore, $G_{g}$ is a bipartite graph. Note that any cycle or path
in a bipartite graph alternates between black and white vertices. A
\textit{rectangular grid graph} $R(m,n)$ (or $R$ for short) is a
grid graph whose vertex set is $V(R)= \{\upsilon \ |\ 1 \leq
\upsilon_{x}\leq m, \ 1\leq \upsilon_{y}\leq n\}$. In the figures we
assume that $(1,1)$ is the coordinates of the vertex in the upper
left corner. The size of $R(m,n)$ is defined to be $mn$. $R(m,n)$ is
called \textit{odd-sized} if $mn$ is odd, and it is called
\textit{even-sized} otherwise. In this paper without loss of
generality, we assume $m\geq n$ and all rectangular grid graphs
considered here are odd$\times$odd, even$\times$odd and
even$\times$even. $R(m,n)$ is called a \textit{n-rectangle}.\\
The following lemma states a result about the Hamiltonicity of
even-sized rectangular graphs.
\begin{lem}
\label{Lemma:1} \cite{CST:AFAFCHPIM} $R(m,n)$ has a Hamiltonian
cycle if and only if it is even-sized and $m,n>1$.
\end{lem}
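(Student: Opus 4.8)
The plan is to treat the two directions of the equivalence separately; the explicit cycle construction in the ``if'' direction is the part that takes real work.

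For \emph{necessity}, suppose $R(m,n)$ admits a Hamiltonian cycle $C$. If $\min\{m,n\}=1$ then $R(m,n)$ is a single vertex or a simple path and so contains no spanning cycle; hence $m,n>1$. For the size condition, recall that $R(m,n)$ is bipartite, with colour classes the white vertices ($v$ with $v_x+v_y$ even) and the black vertices ($v_x+v_y$ odd). A cycle in a bipartite graph has even length, so $C$ has even length; since its length equals the number of vertices it contains, namely $mn$, we conclude that $mn$ is even. This direction requires nothing further.

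For \emph{sufficiency}, assume $mn$ is even and $m,n>1$; then at least one dimension is even, and I will build a Hamiltonian cycle assuming it is the number of columns $m$ (the remaining case follows by transposing the grid). The cycle has a comb (boustrophedon) shape: descend the first column from $(1,1)$ to $(1,n)$, run rightwards along the bottom row to $(m,n)$, ascend the last column to $(m,1)$, and then snake leftwards through the columns $m-1,m-2,\dots,2$ using only rows $1,\dots,n-1$ — column $m-1$ traversed downwards, column $m-2$ upwards, and so on alternately, with each turn made through a row-$1$ or row-$(n-1)$ vertex — finally closing up with the edge from $(2,1)$ to $(1,1)$.

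It remains to verify that this closed walk is a Hamiltonian cycle: consecutive vertices are adjacent by construction; the first column, the bottom row and the last column each contribute all of their vertices exactly once; and for each intermediate column $i$ with $2\le i\le m-1$ the bottom vertex is supplied by the spine while the vertices in rows $1,\dots,n-1$ are swept by a single snake pass. The one genuinely delicate point — and the place where the hypothesis that $m$ is even gets used — is to check that the snake, which starts downwards at column $m-1$ and alternates, leaves column $2$ in the upward direction, so that it arrives at $(2,1)$ and can be joined to $(1,1)$; with this verified, together with a quick look at the degenerate cases $m=2$ and $n=2$, the proof is finished. (Alternatively, one may simply invoke the standard figure for this construction in \cite{CST:AFAFCHPIM}.)
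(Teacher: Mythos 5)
Your proposal is correct, but note that the paper itself never proves this lemma: it is quoted verbatim from Chen, Shen and Topor \cite{CST:AFAFCHPIM}, and the text only adds the remark (illustrated by the figure) that the cycles produced there contain all boundary edges on three sides of the rectangle. What you supply is essentially a self-contained reconstruction of that cited result. Your necessity argument (no spanning cycle when $\min\{m,n\}=1$, plus bipartiteness forcing the cycle length $mn$ to be even) is the standard one and is complete. Your sufficiency construction is the same boustrophedon cycle that underlies the cited reference and the paper's figure: first column down, bottom row across, last column up, then a snake through columns $m-1,\dots,2$ in rows $1,\dots,n-1$, closed by the edge $(2,1)(1,1)$; the parity point you single out is genuinely where evenness of $m$ enters, and it is a one-line check (column $2$ is the $(m-2)$nd snake column, and $m-2$ even forces it to be traversed upward, landing at $(2,1)$), so you might as well write it out rather than leave it as ``to be verified.'' One small bonus of your explicit construction worth stating: it visibly contains all boundary edges of three sides of $R(m,n)$, which is exactly the property the paper invokes right after the lemma and relies on later when merging the Hamiltonian cycles of $R_1$--$R_4$ with the path in $R_5$ via parallel edges; citing \cite{CST:AFAFCHPIM} alone would give the existence statement but not make that feature explicit.
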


\begin{figure}[tb]
  \centering
  \includegraphics[scale=1]{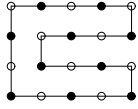}
  \caption[]%
  {\small A Hamiltonian cycle for the rectangular grid graph $R(5,4)$.}
\label{fig:hamcycle}
\end{figure}

\par Figure \ref{fig:hamcycle} shows a
Hamiltonian cycle for an even-sized rectangular grid graph, found by
Lemma \ref{Lemma:1}. Every Hamiltonian cycle found by this lemma
contains all the boundary edges on the three sides of the
rectangular grid graph. This shows that for an even-sized
rectangular graph $R$, we can always find a Hamiltonian cycle, such
that it contains all the boundary edges, except of exactly one side of $R$ which contains an even number of vertices.\par
Two different vertices $\upsilon$ and $\upsilon'$ in $R(m,n)$ are
called \textit{color-compatible} if either both $\upsilon$ and
$\upsilon'$ are white and $R(m,n)$ is odd-sized, or $\upsilon$ and
$\upsilon'$ have different colors and $R(m,n)$ is even-sized. Let
$(R(m,n),s,t)$ denote the rectangular grid graph $R(m,n)$ with two
specified distinct vertices $s$ and $t$. Without loss of generality,
we assume $s_{x} \leq t_{x}$.\\ $(R(m,n),s,t)$ is called
\textit{Hamiltonian} if there exists a Hamiltonian path between $s$
and $t$ in $R(m,n)$. An even-sized rectangular grid graph contains
the same number of black and white vertices. Hence, the two
end-vertices of any Hamiltonian path in the graph must have
different colors. Similarly, in an odd-sized rectangular grid graph
the number of white vertices is one more than the number of black
vertices. Therefore, the two end-vertices of any Hamiltonian path in
such a graph must be white. Hence, the color-compatibility of $s$
and $t$ is a necessary condition for $(R(m,n),s,t)$ to be
Hamiltonian. Furthermore, Itai \textit{et al.} \cite{IPS:HPIGG}
showed that if one of the following conditions hold, then
$(R(m,n),s,t)$ is not Hamiltonian:
\begin{itemize}
\item [(F1)] $R(m,n)$ is a 1-rectangle and either $s$ or $t$ is not a corner
vertex (Figure \ref{RecFig}(a))
\item[(F2)] $R(m,n)$ is a 2-rectangle
and $(s,t)$ is a nonboundary edge, i.e. $(s,t)$ is an edge and it is
not on the outer face (Figure \ref{RecFig}(b)).
\item [(F3)] $R(m,n)$ is
isomorphic to a 3-rectangle grid graph $R'(m,n)$ such that $s$ and
$t$ is mapped to $s'$ and $t'$ and all of the following three conditions hold:
\begin{enumerate}
\item $m$ is even,
\item $s'$ is black, $t'$ is white,
\item  $s'_{y}=2$ and $s'_{x}<t'_{x}$ (Figure \ref{RecFig}(c)) or
$s'_{y}\neq 2$ and $s'_{x}< t'_{x}-1$ (Figure \ref{RecFig}(d)).
\end{enumerate}
\end{itemize}
Also by \cite{IPS:HPIGG} for a rectangular graph $R(m,n)$ with two
distinct vertices $s$ and $t$, $(R(m,n),s,t)$ is Hamiltonian if and
only if $s$ and $t$ are color-compatible and $R(m,n)$, $s$ and $t$
do not satisfy any of conditions $(F1)$, $(F2)$ and $(F3)$.
In the following we use $P(R(m,n),s,t)$ to indicate the problem of
finding a longest path between vertices $s$ and $t$ in a rectangular
grid graph $R(m,n)$, $L(R(m,n),s,t)$ to show the length of longest
paths between $s$ and $t$ and $U(R(m,n),s,t)$ to indicate the upper
bound on the length of longest paths between $s$ and $t$.
\par The authors in
\cite{FAA:ALAFFLPIRGG} showed that the longest path problem between
any two given vertices $s$ and $t$ in rectangular grid graphs satisfies one of the
following conditions:
\begin{itemize}
\item [(C0)] $s$ and $t$ are color-compatible and none of (F1)- (F3) hold.
\item [(C1)] Neither (F1) nor (F$2^*$) holds and either
\begin{enumerate}
\item $R(m,n)$ is even-sized and $s$ and $t$ are same-colored or
\item $R(m,n)$ is odd-sized and $s$ and $t$ are different-colored.
\end{enumerate}
\item [(C2)]
\begin{enumerate}
\item $R(m,n)$ is odd-sized and $s$ and $t$ are black-colored and neither (F1) nor (F$2^*$) holds, or
\item $s$ and $t$ are color-compatible and (F3) holds.
\end{enumerate}
\end{itemize}
Where (F$2^*$) is defined as follows:
\begin{itemize}
\item [(F$2^*$)] $R(m,n)$ is a 2-rectangle and $s_{x}=\ t_{x}$ or $( s_{x} = t_{x}-1$ and $s_{y} \neq t_{y})$.
\end{itemize}
They also proved some upper bounds on the length of longest paths as
following:\\ $U(R(m,n),s,t)=
  \begin{cases}
    t_{x}-s_{x}+1,                               & \text{if  $(F1)$,} \\
    max(t_{x}+ s_{x},\ 2m - t_{x}- s_{x}+2), & \text{if $(F2^*)$,} \\
    mn, &           \text{if $(C0)$,}\\
    m n -1, &         \text{if $(C1)$,} \\
    mn-2, &         \text{if $(C2)$.}\\
  \end{cases}$
\begin{thm} \label{theorem:2}\cite{FAA:ALAFFLPIRGG} Let $U(R(m,n),s,t)$ be the upper bound on the length of
longest paths between $s$ and $t$ in $R(m,n)$ and let
$L(R(m,n),s,t)$ be the length of longest paths between $s$ and $t$.
In a rectangular grid graph $R(m,n)$, a longest path between any two
vertices $s$ and $t$ can be found in linear time and its length
$($i.e., $L(R(m,n),s,t))$ is equal to $U(R(m,n),s,t)$.
\end{thm}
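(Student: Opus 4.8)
The plan is to prove the two halves of the statement separately: first that $U(R(m,n),s,t)$ is a genuine upper bound on $L(R(m,n),s,t)$, and then that an $s$–$t$ path of exactly that length can be constructed in linear time.

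For the upper bound I would argue case by case according to which of (F1), (F2$^*$), (C0), (C1), (C2) holds. The case (F1) is immediate: a $1$-rectangle is a path, so the only $s$–$t$ path is the subpath between them, of length $t_x-s_x+1$. For (F2$^*$), a $2$-rectangle has very restricted path structure, and a short combinatorial argument shows that once $s$ and $t$ sit in the forbidden configuration, any $s$–$t$ path must omit at least one of the two ``arms'' of the rectangle, giving the $\max(t_x+s_x,\,2m-t_x-s_x+2)$ bound. The remaining three bounds are parity bounds coming from the bipartition into black and white vertices: in an even-sized rectangle the two colors are balanced, so any $s$–$t$ Hamiltonian path needs $s,t$ of opposite colors, while in an odd-sized rectangle the white vertices exceed the black ones by one, so a Hamiltonian path needs both endpoints white. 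Counting then shows that if the color condition fails as in (C1) we must lose at least one vertex ($L\le mn-1$), and that the sharper obstruction of (C2) — two black endpoints in an odd rectangle, or the (F3) configuration — forces the loss of at least two vertices ($L\le mn-2$). For (C0) the bound $mn$ is the trivial bound, and its attainability is exactly the Itai–Papadimitriou–Szwarcfiter characterization of Hamiltonian $(R(m,n),s,t)$ quoted above.

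For attainability I would use the divide-and-conquer scheme of \cite{FAA:ALAFFLPIRGG}. The base cases are small rectangles (widths $1,2,3$ and a few fixed sizes), handled directly, together with Lemma~\ref{Lemma:1} — every even-sized $R$ with $m,n>1$ has a Hamiltonian cycle using all boundary edges of three of its sides — and the Itai et al.\ Hamiltonicity test, which supply Hamiltonian paths whenever (C0) holds. For the inductive step one cuts $R(m,n)$ with a vertical or horizontal line into two subrectangles $R_1,R_2$ chosen so that (i) the positions of $s,t$ and the colors/sizes of $R_1,R_2$ place each $R_i$ into a case whose length is known inductively, and (ii) there is an edge of the cut whose two endpoints can serve as a free endpoint of the sub-path in $R_1$ and of the sub-path in $R_2$, so that concatenating the two sub-paths across that edge produces one $s$–$t$ path. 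Careful bookkeeping then shows the total number of vertices covered is exactly $U(R(m,n),s,t)$; since each partition step removes at least a constant amount of the rectangle, the recursion has $O(m+n)$ depth and the whole construction runs in linear time.

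The main obstacle is the case analysis in the inductive step: the position of the partition line, which piece inherits $s$ and which inherits $t$, and where the gluing edge lies, all depend on the parities of $m$ and $n$, on the colors of $s$ and $t$, and on whether $s,t$ lie near a boundary or in a degenerate configuration such as (F2$^*$) or (F3). One has to enumerate these sub-cases, verify in each that both subproblems are strictly smaller and fall under (C0), (C1) or (C2) with color- and position-compatible endpoints, and check that the lengths add up with no off-by-one loss at the seam. Handling the exceptional small rectangles and the (F3) configuration — where the two ``missing'' vertices must be placed consistently with the recursive split — is the most delicate part; the rest is routine glue-the-paths bookkeeping.
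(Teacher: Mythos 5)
Note first that this paper does not prove Theorem~\ref{theorem:2} itself: the theorem is quoted from \cite{FAA:ALAFFLPIRGG}, and Section~3 here builds an alternative construction on top of it. Your upper-bound half is sound and is the standard argument: (F1) and (F2$^*$) are direct, and the $mn$, $mn-1$, $mn-2$ bounds for (C0)--(C2) come from the black/white parity of paths in a bipartite graph combined with the Itai--Papadimitriou--Szwarcfiter conditions; for the (F3) branch of (C2) you should state explicitly that non-Hamiltonicity together with the parity of a path whose endpoints are color-compatible rules out length $mn-1$ as well, which is why the bound drops to $mn-2$ rather than $mn-1$. For attainability, however, what you sketch is the $O(m+n)$-depth divide-and-conquer of \cite{FAA:ALAFFLPIRGG} --- legitimate, since that is exactly how the cited theorem is proved there, but you leave the entire splitting/gluing case enumeration as a promise, and it is the opposite of what the present paper does. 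Here the construction is: one peeling of $R(m,n)$ into $R_1,\dots,R_5$ with $s,t$ antipodal in $R_5$, made proper by a boundary adjustment in the even$\times$even exceptional cases (F$1'$), (F$2'$); one trisecting of $R_5$ into two $2$-rectangles and a middle block joined at junction vertices; Hamiltonian cycles in $R_1$--$R_4$ from Lemma~\ref{Lemma:1}; and a merge along parallel edges. This uses only a constant number of partition steps, which is precisely what enables the constant-time parallel algorithm of Section~4, whereas a recursion of depth $O(m+n)$ does not parallelize that way. So your proposal is a correct outline of the proof in the cited source, but to be self-contained it must actually carry out the deferred case analysis, and to serve the purpose this paper needs (constant partition depth), the peeling-plus-trisecting route is the one to follow.
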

\begin{figure}[tb]
  \centering
  \includegraphics[scale=1]{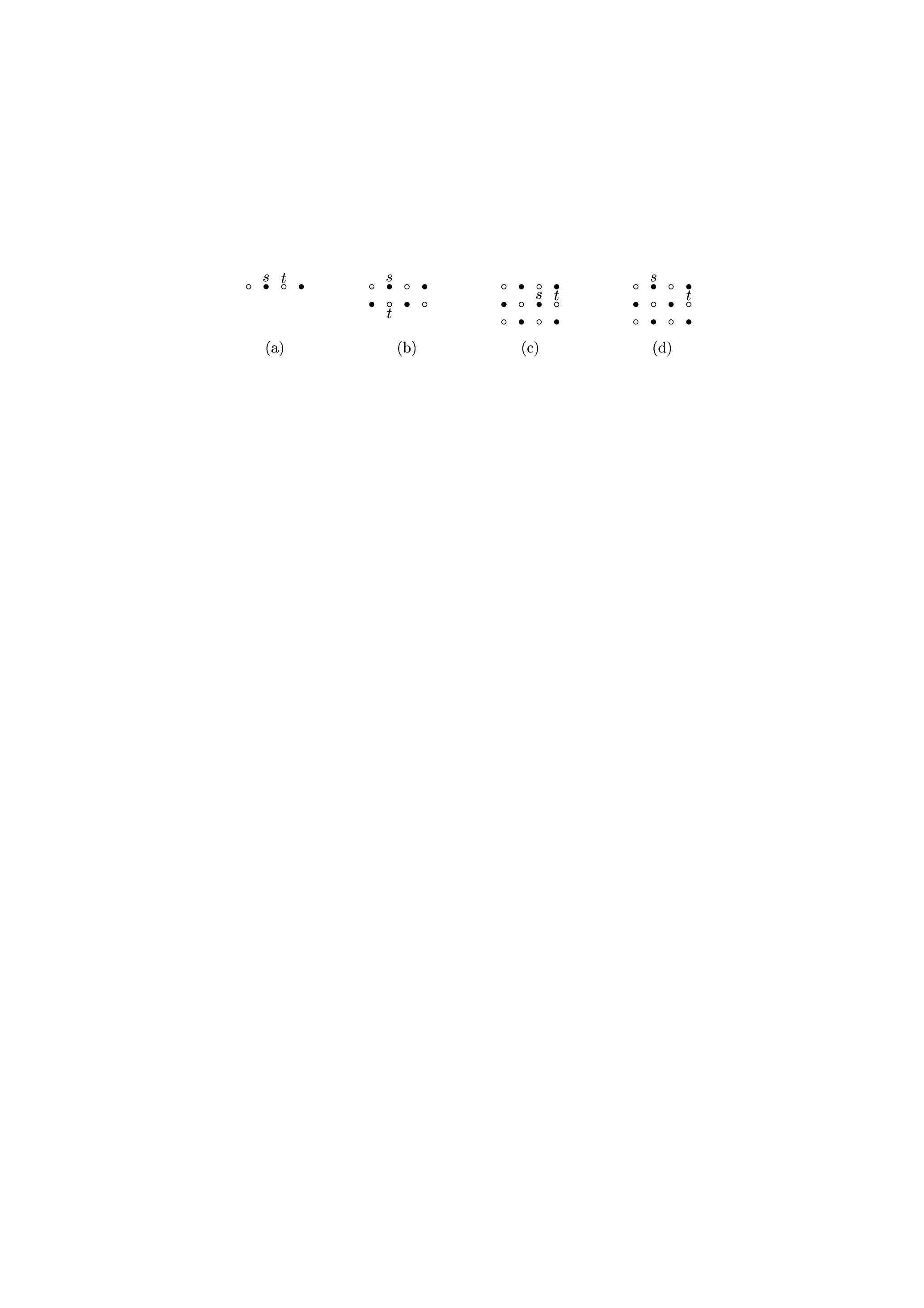}
  \caption[]%
  {\small Rectangular grid graph in which there is no Hamiltonian path between $s$ and $t$.}
  \label{RecFig}
\end{figure}

\section{\bf The sequential algorithm}
In this section, we present a sequential algorithm for finding a
longest path between two vertices in rectangular grid graphs. This
algorithm is the base of our parallel algorithm which is introduced
in Section 4. First, we solve the problem for 1-rectangles and
2-rectangles.
\begin{lem} \label{Lemma:5} \cite{FAA:ALAFFLPIRGG} Let $P(R(m,n),s,t)$ be a longest
path problem with $n=1$ or $n=2$, then
$L(R(m,n),s,t)=U(R(m,n),s,t)$.
\end{lem}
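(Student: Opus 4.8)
The plan is to distinguish the two cases $n=1$ and $n=2$, and within $n=2$ to split according to which of the conditions (F$2^*$), (C0), (C1) is in force; note first that for $1$- and $2$-rectangles condition (F3), and hence case (C2), cannot arise, and that (F1) can arise only when $n=1$, so these possibilities are exhaustive. If $n=1$ then $R(m,1)$ is a simple path on $m$ vertices, so the only simple $s$--$t$ path has exactly $t_x-s_x+1$ vertices; if (F1) holds this already equals $U(R(m,1),s,t)=t_x-s_x+1$, while if (F1) fails then $\{s,t\}=\{(1,1),(m,1)\}$, the path is Hamiltonian on $mn=m$ vertices, and case (C0) gives $U=mn$. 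For $n=2$ under (C0) the argument is equally short: $s$ and $t$ are color-compatible and none of (F1)--(F3) holds, so by the Itai \textit{et al.} criterion stated in Section~2 there is a Hamiltonian $s$--$t$ path and $L=mn=U$.

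For $n=2$ under (F$2^*$) there are exactly two configurations: $s$ and $t$ lie in a single column $c$, or in adjacent columns $c,c+1$ in different rows. For the lower bound I would write down two explicit ``snakes'' from $s$ to $t$, one confined to the columns of index $\le\max(s_x,t_x)$ and one confined to the columns of index $\ge\min(s_x,t_x)$; a direct count shows they have $s_x+t_x$ and $2m-s_x-t_x+2$ vertices, so $L\ge\max(s_x+t_x,\,2m-s_x-t_x+2)=U$. For the matching bound I would argue that any $s$--$t$ path is trapped on one side. In the same-column case the only two vertices of column $c$ are the endpoints $s,t$, which have degree $1$ in a path, so the interior of an $s$--$t$ path avoids column $c$ and, being connected, lies wholly among the columns of index $<c$ or wholly among those of index $>c$; hence the path has at most $s_x+t_x$ or at most $2m-s_x-t_x+2$ vertices. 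In the adjacent-column case one cuts the ladder between columns $c$ and $c+1$: this cut carries only two edges and separates $s$ from $t$, so an $s$--$t$ path crosses it exactly once, and since that crossing edge is necessarily incident to one of the degree-$1$ endpoints, the path consists of that endpoint together with a sub-path confined to one side of the cut, again of at most $s_x+t_x$ or at most $2m-s_x-t_x+2$ vertices. In both configurations $L\le U$, so $L=U$.

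Finally, for $n=2$ under (C1), $s$ and $t$ have the same color and, since (F$2^*$) fails, $t_x\ge s_x+2$. The upper bound is a parity count: $R(m,2)$ has $m$ vertices of each color, and a path whose two ends have the same color contains one more vertex of that color than of the other, hence at most $2m-1=mn-1=U$ vertices. The remaining, and I expect hardest, step is to produce an $s$--$t$ path on exactly $2m-1$ vertices, that is, one omitting a single vertex of the color opposite to $s$ and $t$. I would build it by a short case analysis on the rows of $s$ and $t$ (up to a vertical reflection $s$ lies in row $1$): snake from $s$ leftward to exhaust the columns on $s$'s side, run across to the far column along one row, and come back toward $t$ along the other row, so that the parity mismatch between the two ``layers'' forces exactly one vertex to be skipped; one then checks that for a suitable choice of the skipped vertex (it lies in $s$'s column, or in a column adjacent to $s$ or $t$, depending on the subcase) the remainder is a single $s$--$t$ path. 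The genuinely delicate points are choosing the omitted vertex correctly and handling the positions where $s$ or $t$ is at or next to a corner, which leave no room on one side and must be treated separately; everything else is either trivial ($n=1$), a direct appeal to the known Hamiltonicity criterion ((C0)), or the counting-and-cut argument above ((F$2^*$)).
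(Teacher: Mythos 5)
Your proposal is correct in substance, but it takes a noticeably different route from the paper, and one step is left as a plan rather than a proof. The paper treats the $2$-rectangle by a single dichotomy -- whether deleting $s$ and $t$ disconnects $R(m,2)$ -- and in each case only exhibits a path attaining $U$ (the values of $U$ are imported as already-proved upper bounds from \cite{FAA:ALAFFLPIRGG}); in the connected case it builds the path through the vertical neighbours $s'$, $t'$ of $s$ and $t$ and joins them so that at most one vertex is omitted, with the details carried by figures. You instead organize the $n=2$ case by the conditions (F$2^*$), (C0), (C1), dispatch (C0) by citing the Itai et al.\ Hamiltonicity criterion rather than by an explicit construction, and -- going beyond the paper -- re-derive the upper bounds yourself: the cut/degree argument showing any $s$--$t$ path in the (F$2^*$) configurations is confined to one side (hence $L\le\max(s_x+t_x,\,2m-s_x-t_x+2)$) and the colour-count giving $L\le mn-1$ under (C1) are both sound and make the lemma self-contained, which is a genuine gain in rigour over the paper's appeal to the cited reference. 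The one place where you fall short of a complete argument is the (C1) lower bound: producing an $s$--$t$ path on exactly $2m-1$ vertices is exactly the content of the paper's $s'$--$t'$ construction (Figure \ref{a3}(c),(d)), and you only sketch the snake-and-skip idea, explicitly deferring the choice of the omitted vertex and the near-corner positions. That construction does go through (the omitted vertex can always be taken adjacent to $s'$ or $t'$ in the middle block, whose endpoints are forced to have equal colours by the colour-compatibility of $s$ and $t$), so nothing in your plan would fail, but as written this case is at the same figure-level of detail as the paper rather than a finished proof; writing out the short case analysis you describe would close it.
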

\begin{proof}
For a 1-rectangle obviously the lemma holds for the single possible
path between $s$ and $t$ (see Figure \ref{a3}(a)). For a
2-rectangle, if removing $s$ and $t$ splits the graph into two
components, then the path going through all vertices of the larger
component has the length equal to $U(R(m,n),s,t)$ (see Figure
\ref{a}(b)). Otherwise, let $s^{'}$ be the vertex adjacent to $s$
and $t^{'}$ be the vertex adjacent to $t$ such that $s^{'}_{y} \neq
s_{y}$  and $t^{'}_{y} \neq t_{y}$. Then we make a path from $s$ to
$s^{'}$ and a path from $t$ to $t^{'}$ as shown in Figure
\ref{a3}(c), (d), and connect $s^{'}$ to $t^{'}$ by a path such that
at most one vertex remains out of the path as depicted in this
figure.
\end{proof}
\begin{figure}[tb]
  \centering
  \includegraphics[scale=1]{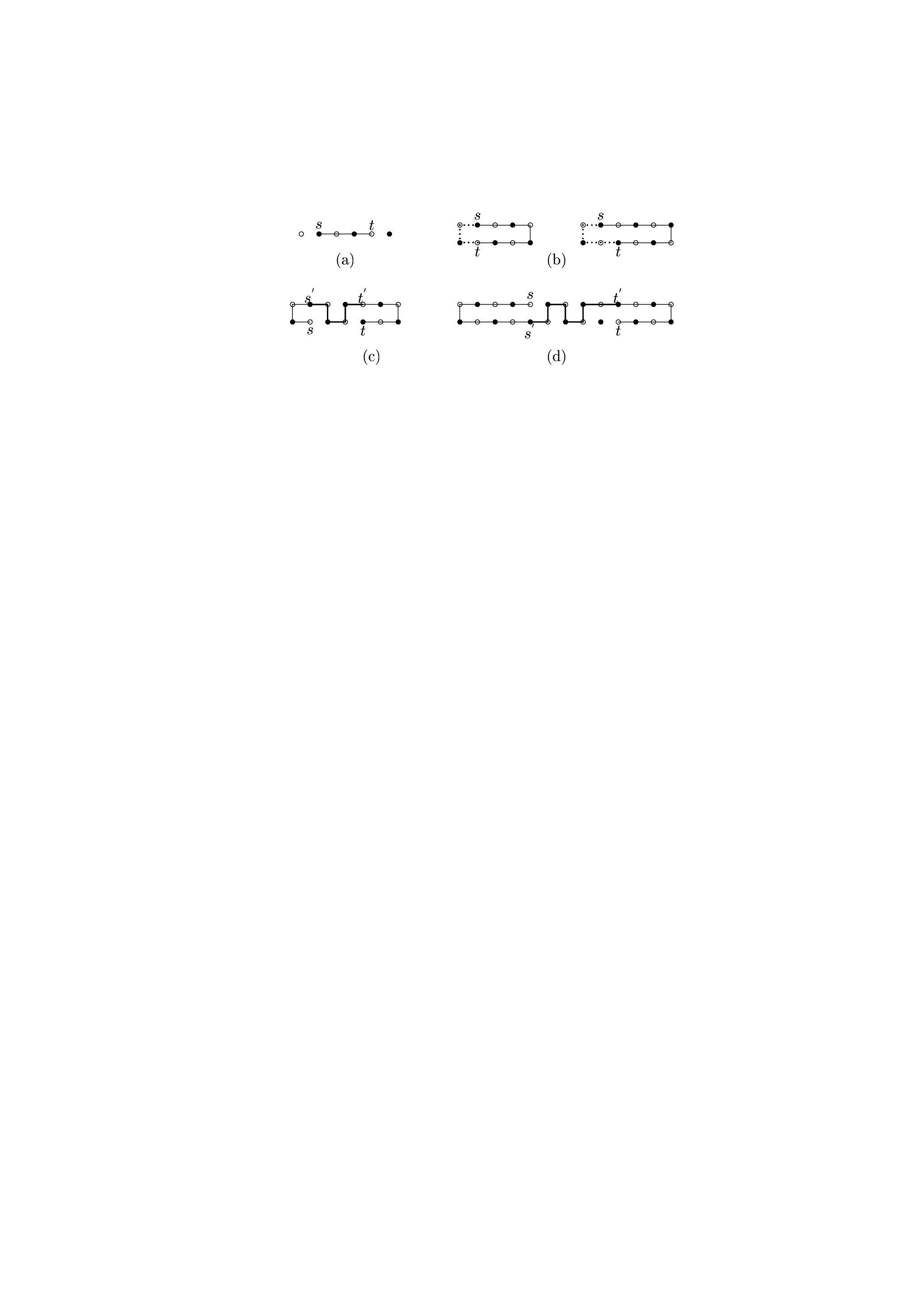}
  \caption[]%
  {
 (a) Longest path between $s$ and $t$ in a 1-rectangle,
  (b) Longest path between $s$ and $t$ in a 2-rectangle,
  (c) and (d) A path with length $2m$ and $2m-1$ for a 2-rectangle, respectively.}
  \label{a3}
\end{figure}
From now on, we assume that $m\geq n> 2$, so one of conditions (C0),
(C1) and (C2) should hold. Following the technique used in
\cite{CST:AFAFCHPIM} we develop an algorithm for finding longest
paths.\\

\noindent\textbf{Definition 3.1.} \cite{FAA:ALAFFLPIRGG} A
\textit{separation} of a rectangular grid graph $R$ is a partition
of R into two disjoint rectangular grid graphs $R_1$ and $R_2$, i.e.
$V(R)=V(R_{1})\cup V(R_{2})$, and $V(R_{1})\cap
V(R_{2})=\emptyset$.\\

\noindent\textbf{Definition 3.2.} \cite{IPS:HPIGG} Let $\upsilon$
and $\upsilon'$ be two distinct vertices in $R$. If
$\upsilon_{x}\leq2$ and $\upsilon'_{x}\geq m-1$, then $\upsilon$ and
$\upsilon'$ are called \textit{antipodes}.
\\

\noindent\textbf{Definition 3.3.} \cite{CST:AFAFCHPIM} Partitioning
a rectangular grid graph $R$ into five disjoint rectangular grid
subgraphs $R_{1}-R_{5}$ that is done by two horizontal and two
vertical separations are called \textit{peeling operation}, if the following two conditions hold:
\begin{enumerate}
\item $s,t\in R_{5}$ and $s$ and $t$ are antipodes.
\item  Each of four rectangular grid subgraphs $R_{1}-R_{4}$ is an
even-sized rectangular grid graph whose boundary sizes are both
greater than one, or is empty.
\end{enumerate}
Generally the two vertical separation of a peeling are done before
the two horizontal separation. However, for an odd$\times$odd or
odd$\times$even rectangular grid graph with $s_{x}=t_{x}$, this
order is reversed in order to guarantee that the boundary sizes of
$R_3$ and $R_4$ are greater than one. Figure
\ref{a} shows a peeling on $R(15,11)$ where $s$ is $(6, 5)$ and $t$ is $(8, 9)$.\\
\begin{figure}[htb]
  \centering
  \includegraphics[scale=1]{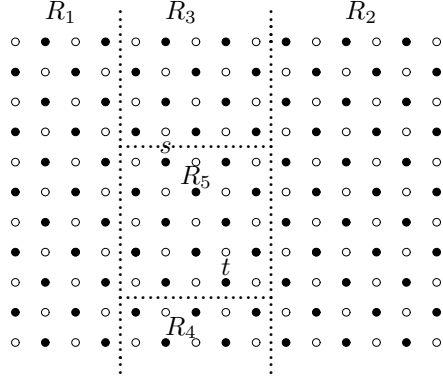}
  \caption[]%
  {\small A peeling on $R(15,11)$.}
  \label{a}
\end{figure}
The following lemma can be obtained directly from Definition 3.3.
\begin{lem} \label{Lemma:5} \cite{CST:AFAFCHPIM}
Let $R_{5}(n_{5},m_{5})$ be the resulting rectangular grid subgraph
of a peeling on $R(n,m)$, where $s,t \in V(R_{5})$. Then
\begin{enumerate}
\item $s,t$ remain the same color in $R_{5}$ as in $R$; and
\item $R_{5}$ has the same parity as $R$, that is, $m_{5}
\mod 2 = m \mod 2$, and $n_{5} \mod 2 = n \mod 2$.
\end{enumerate}
\end{lem}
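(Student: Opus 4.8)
The plan is to pin down exactly where $R_5$ sits inside $R$ and then reduce both claims to a parity count on the widths and heights that get peeled away. Since the two vertical separations of a peeling come first, they strip a left block and a right block off $R$, each of full height $n$; write $c_\ell$ and $c_r$ for their column counts (put $c_\ell=0$ or $c_r=0$ if that block is empty). The two subsequent horizontal separations strip a top block and a bottom block off the remaining middle column $R(m_5,n)$, each of width $m_5$; write $r_t$ and $r_b$ for their row counts. So $R_1=R(c_\ell,n)$, $R_2=R(c_r,n)$, $R_3=R(m_5,r_t)$, $R_4=R(m_5,r_b)$, while $R_5$ occupies columns $c_\ell+1,\dots,c_\ell+m_5$ and rows $r_t+1,\dots,r_t+n_5$ of $R$, with $m_5=m-c_\ell-c_r$ and $n_5=n-r_t-r_b$. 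Re-coordinatising $R_5$ sends a vertex $v$ to $(v_x-c_\ell,\,v_y-r_t)$, so every vertex — in particular $s$ and $t$ — keeps its colour exactly when $c_\ell+r_t$ is even, and $R_5$ has the same parity as $R$ exactly when both $c_\ell+c_r$ and $r_t+r_b$ are even (if instead one leaves the coordinates of $R$ on $R_5$, assertion (1) is immediate and only these parity facts are at stake). Thus the whole lemma reduces to showing $c_\ell+c_r$, $r_t+r_b$ and $c_\ell+r_t$ are all even.

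Next I would run the arithmetic. Condition~2 of Definition~3.3 makes each of $R_1,\dots,R_4$ empty or even-sized, so $c_\ell n$, $c_r n$, $m_5 r_t$, $m_5 r_b$ are all even; summing the five cardinalities gives $mn=c_\ell n+c_r n+m_5 r_t+m_5 r_b+m_5 n_5\equiv m_5 n_5\pmod 2$, so $R$ and $R_5$ are simultaneously odd- or even-sized. For the sharper per-axis statement I would split on the parity of $n$. When $n$ is odd — the odd$\times$odd and even$\times$odd families — even-sizedness of $R_1,R_2$ forces $c_\ell,c_r$ to be even, hence $m_5\equiv m\pmod 2$; if moreover $m$ is odd, then $m_5$ is odd, so even-sizedness of $R_3,R_4$ forces $r_t,r_b$ even, giving $n_5\equiv n\pmod 2$ and $c_\ell+r_t$ even. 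The exceptional ordering (two horizontal cuts first, used for an odd-sided rectangle with $s_x=t_x$) is handled by the same computation with the two directions exchanged.

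The remaining situations are where the real work is, and I expect them to be the main obstacle: namely the cases in which the rectangle actually being cut has an even side length — most visibly the even$\times$even family, but also the horizontal step of an even$\times$odd peeling, where the middle column already has even width $m_5$. There even-sizedness alone no longer pins down the parity of a peeled-off width or height, since, e.g., a $3\times n$ strip with $n$ even is even-sized yet three columns wide. To close these cases one must invoke the precise recipe by which the separations are chosen in \cite{CST:AFAFCHPIM} — the ordering rule and the boundary-size requirements stated right after Definition~3.3, together with the demand that $s$ and $t$ be genuine antipodes of $R_5$ — and argue that this forces the peeled layers into a configuration whose total width and total height are even, equivalently one that does not move the colour classes of $R$ inside $R_5$ (so that, e.g., the color-compatibility of $s$ and $t$ is preserved). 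Once that case analysis is carried out against the construction of \cite{CST:AFAFCHPIM}, the rest is just the bookkeeping above.
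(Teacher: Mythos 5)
Your reduction of the lemma to the parity of the peeled layer sizes ($c_\ell$, $c_r$, $r_t$, $r_b$ in your notation) is sound bookkeeping, but the proof is not complete, and you acknowledge this yourself: precisely in the cases where the strip being cut off has an even orthogonal side --- the even$\times$even family, and the horizontal cuts when the middle block already has even width $m_5$ --- even-sizedness of $R_1$--$R_4$ does not determine the parity of the peeled widths and heights (your own $3\times\mathrm{even}$ strip example shows this), and you defer exactly these cases, which are the only nontrivial content of the lemma, to an unexecuted appeal to \emph{the precise recipe} of \cite{CST:AFAFCHPIM}. Since for an arbitrary partition merely satisfying conditions~1 and~2 of Definition~3.3 the per-axis parity claim is genuinely false (a $3\times n$ left strip with $n$ even is admissible there), no amount of further juggling with even-sizedness can close the gap; the proof \emph{must} use how the separation lines are actually chosen, and that step is missing.

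The missing ingredient is short and is visible inside this very paper: the peeling boundaries are given by the formulas for $r_1,r_2,r_3,r_4$ in Section~4. There the number of columns peeled on the left is $s_x-2$ or $s_x-1$ according to the parity of $s_x$, hence always even; the number peeled on the right is $m-t_x$ or $m-t_x-1$, again chosen even; and likewise an even number of rows is removed above $\min(s_y,t_y)$ and below $\max(s_y,t_y)$ (the exceptional ordering for $s_x=t_x$ changes nothing in this count). So all four of your quantities are individually even, independently of the parities of $m$ and $n$; both assertions of the lemma then follow in one line, with no case split on $n$, and this evenness is also what guarantees condition~2 of Definition~3.3 in the first place. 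This is in effect how the paper treats the statement: it gives no proof, citing \cite{CST:AFAFCHPIM} and remarking that the lemma follows directly from the peeling as defined --- i.e.\ from the evenness of each peeled layer, not from even-sizedness of $R_1$--$R_4$ alone, which is the route your argument tried and which cannot succeed in the even-sided cases.
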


\noindent\textbf{Definition 3.4.} A peeling operation on $R$ is
called \textit{proper} if
$|R_{1}|+|R_{2}|+|R_{3}|+|R_{4}|+U(R_{5},s,t) = U(R(n,m),s,t)$,
where $|R_{i}|$ denotes the number of vertices of
$R_{i}$.
\begin{lem} \label{Lemma:6} For the longest path problem $P(R(n,m),s,t)$, any peeling on $R(m,n)$ is proper if:
\begin{enumerate}
\item The condition $(C0)$ holds and $m\bmod 2 = n\bmod 2$ $($i.e. $R(m,n)$ is $even\times even$ or $odd \times odd)$, or
\item One of the conditions $(C1)$ and $(C2)$ hold and $R(m,n)$ is $even \times odd$ or $odd \times
odd$.
\end{enumerate}
\end{lem}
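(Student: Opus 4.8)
The plan is to reduce the claim to a purely arithmetic identity about how the upper bound $U$ splits under a peeling, and then verify that identity case by case according to which of $(C0)$, $(C1)$, $(C2)$ holds. The key observation is that a peeling decomposes $R(m,n)$ as the disjoint union of the four even-sized blocks $R_1,\dots,R_4$ together with the central block $R_5$ containing $s$ and $t$, so that $|R_1|+\cdots+|R_4|+|R_5| = mn$. By Lemma~\ref{Lemma:5}, $R_5$ has the same parity as $R$ in each coordinate and $s,t$ keep their colors in $R_5$; hence the classification $(C0)/(C1)/(C2)$ that applies to $(R(m,n),s,t)$ also applies to $(R_5,s,t)$, with one caveat to be checked: that none of the small-case degeneracies $(F1)$, $(F2^{*})$, $(F3)$ is newly created inside $R_5$. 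I would dispatch this by noting that the peeling conditions force $s,t$ to be antipodes in $R_5$ and each boundary of the four peeled blocks to exceed one, which keeps $R_5$ large enough (its boundary sizes are $\ge$ the relevant thresholds) that it falls under the same condition as $R$; in particular $(F1)$ and $(F2^{*})$ cannot appear because $n_5,m_5>2$ when $R$ itself is past the $1$- and $2$-rectangle cases, and the hypotheses of the lemma are exactly tailored so $(F3)$ does not intervene.

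Granting that, the proof becomes: substitute the formula for $U$ from Theorem~\ref{theorem:2} into the definition of \emph{proper}. In case~1, $(C0)$ holds and $R$ is $even\times even$ or $odd\times odd$; then $U(R(m,n),s,t)=mn$, and since $R_5$ is also $(C0)$ with the same parity pattern (Lemma~\ref{Lemma:5}), $U(R_5,s,t)=|R_5|$. Therefore $|R_1|+|R_2|+|R_3|+|R_4|+U(R_5,s,t)=|R_1|+\cdots+|R_4|+|R_5|=mn=U(R(m,n),s,t)$, so the peeling is proper. In case~2, one of $(C1)$, $(C2)$ holds and $R$ is $even\times odd$ or $odd\times odd$. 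Here $U(R(m,n),s,t)=mn-1$ if $(C1)$ and $mn-2$ if $(C2)$. Again by Lemma~\ref{Lemma:5}, $R_5$ satisfies the same condition — the parity of each side and the colors of $s,t$ are preserved, and the $even\times odd$ or $odd\times odd$ shape is preserved — so $U(R_5,s,t)=|R_5|-1$ or $|R_5|-2$ respectively, with the \emph{same} subtracted constant. Summing, $|R_1|+\cdots+|R_4|+U(R_5,s,t)=(mn-|R_5|)+(|R_5|-c)=mn-c=U(R(m,n),s,t)$ where $c\in\{1,2\}$ matches. Hence the peeling is proper.

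The main obstacle, and the step deserving the most care, is the verification that the condition governing $(R_5,s,t)$ is \emph{identical} to the one governing $(R(m,n),s,t)$ — not merely consistent in parity, but not accidentally dropping into a worse case via $(F1)$, $(F2^{*})$, or $(F3)$, nor improving into a better case. The parity and color invariance is handed to us by Lemma~\ref{Lemma:5}, so what remains is a size/degeneracy argument: I would check that the definition of peeling (antipodal $s,t$ in $R_5$, every nonempty $R_i$ having both boundaries $>1$, and the reversed separation order in the $s_x=t_x$ odd case that secures $R_3,R_4$'s boundaries $>1$) forces $R_5$ to be large enough and shaped correctly that it cannot realize a degeneracy that $R$ avoided. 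This is exactly why the lemma restricts case~1 to equal-parity rectangles and cases~$(C1)$/$(C2)$ to $even\times odd$ or $odd\times odd$: in the excluded shapes the central block's parity or the relative position of $s,t$ could shift the applicable condition, breaking the clean cancellation above. Once this compatibility is nailed down, the arithmetic is immediate.
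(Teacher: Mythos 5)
Your overall reduction is the right frame: since $|R_1|+\cdots+|R_4|+|R_5|=mn$, properness is equivalent to $U(R_5,s,t)=|R_5|-c$ with the same deficiency $c\in\{0,1,2\}$ as for $R$, and the parity/color preservation lemma hands you the color and parity data. But the step you yourself flag as ``the main obstacle'' is resolved incorrectly. You claim that $(F1)$ and $(F2^{*})$ ``cannot appear because $n_5,m_5>2$ when $R$ itself is past the $1$- and $2$-rectangle cases.'' That is false: a peeling can collapse $R_5$ to a $1$-rectangle even when $R$ is large. For instance, if $s_y=t_y$, the horizontal blocks $R_3$ and $R_4$ may absorb every row except the one containing $s$ and $t$ (the row counts above and below can both be even), leaving $m_5\times 1$; the definition of peeling only constrains the parity and boundary sizes of the \emph{peeled} blocks $R_1$--$R_4$, not a lower bound on $R_5$. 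So the claim that $(R_5,s,t)$ sits in literally the same case $(C1)$ or $(C2)$ with the same formula $|R_5|-1$ or $|R_5|-2$ does not follow from parity and color preservation alone, and your cancellation argument has no support at exactly the point where support is needed.

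The paper's proof spends essentially all of its effort on this degenerate situation. It first argues (using that $R_5$ inherits the side parities) that $R_5$ cannot be a $2$-rectangle under the hypotheses of item~2, and then analyzes the case where $R_5$ \emph{is} a $1$-rectangle: because $s,t$ are antipodes in $R_5$ and their coordinate parities are pinned down by the color assumptions, under $(C1)$ one of $s,t$ must land on a corner of the $1$-rectangle, so the $(F1)$ upper bound $t_x-s_x+1$ equals $|R_5|-1$; under $(C2)$ both endpoints sit one step inside the corners, so it equals $|R_5|-2$. In other words, the applicable case for $R_5$ may genuinely change to $(F1)$, and properness survives only because the degenerate formula happens to produce the same deficiency $c$ -- a fact that must be, and in the paper is, checked by this explicit coordinate/parity argument. (For item~1, where $(C0)$ holds, the paper does not reprove anything but cites the Hamiltonian-path peeling result of Chen et al.; your direct argument for that case silently assumes the same non-degeneracy.) To repair your proposal you would need to replace the ``$R_5$ is large enough'' assertion with this antipodality-plus-parity analysis of the $1$-rectangle (and, for completeness, the exclusion of the $2$-rectangle) cases.
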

\begin{proof}
The lemma has been proved for the case that (C0) holds (see
\cite{CST:AFAFCHPIM}). So, we consider conditions (C1) and (C2).
From Lemma \ref{Lemma:5}, we know that $s$ and $t$ are still
color-compatible, and we are
going to prove that $P(R_5(m_5,n_5),s,t)$ is not in cases $F1$ and $F2^{*}$.\\
By Lemma \ref{Lemma:5}, when $R(m,n)$ is an $odd\times odd$
rectangular grid graph, $R_{5}(m_{5},n_{5})$ is also an $odd\times
odd$ rectangular grid graph, $s$ and $t$ have the same color as in
$R$, and hence $R_{5}(m_{5},n_{5})$ is not a 2-rectangle. If $R_{5}$
is a 1-rectangle, then $s_{y}=t_{y}$ or $s_{x}=t_{x}$ and then we
have the two following cases:\par Case1. (C1) holds and both $s$ and
$t$ are different color. In this case, one of $s_{x}$ and $t_{x}$
($s_{y}$ and $t_{y}$) is even and the other is odd. Considering that
$s$ and $t$ are antipodes and $R_5$ is odd$\times$odd, one of $s$
and $t$ must be at the corner and exactly one of the vertex goes out
of the path.
\par Case2. (C2) holds and both $s$ and $t$ are black color. In this case, all
$s_{x}$, $s_{y}$, $t_{x}$ and $t_{y}$ are even. Hence, vertices $s$
and $t$ are before corner vertices and exactly two vertices go out
of the path.\par In the similar way, when $R(m,n)$ is an $even\times
odd$ rectangular grid graph ((C1) holds), $R_{5}(m_{5},n_{5})$ is
also an $even \times odd$ rectangular grid graph, and hence
$R_{5}(m_{5},n_{5})$ is not a 2-rectangle. If $R_{5}$ is a
1-rectangle, then $s_{y}=t_{y}$. In this case, $s_{x}$ and $t_{x}$
are both odd or even. Hence, $s$ or $t$ are at the
corner and exactly one vertex goes out of the path. \\
Therefore by Theorem \ref{theorem:2},
$U(R_{5}(m_{5},n_{5}),s,t)=U(R(m,n),s,t)$ and any peeling of
$R(m,n)$ is always proper.
\end{proof}
Nevertheless, a peeling operation in an $even\times even$
rectangular grid graph $R(m,n)$ may not be proper, and
$U(R_{5}(m_{5},n_{5}),s,t)\neq U(R(m,n),s,t)$, see Figure \ref{bs}
where the dotted-lines represent a peeling operation. In the two
following cases a peeling operation is not proper:
\begin{enumerate}
\item [(F$1^{'}$)] $s$ is black, $s_x$ is even (or
odd), $t_y=s_y+1$ and $s_x\neq t_x$
\item [(F$2^{'}$)]$s$ is white, $s_x$ is even (or odd), $t_y=s_y-1$ and
$s_x\neq t_x$;
\end{enumerate}
\begin{figure}[tb]
  \centering
  \includegraphics[scale=1]{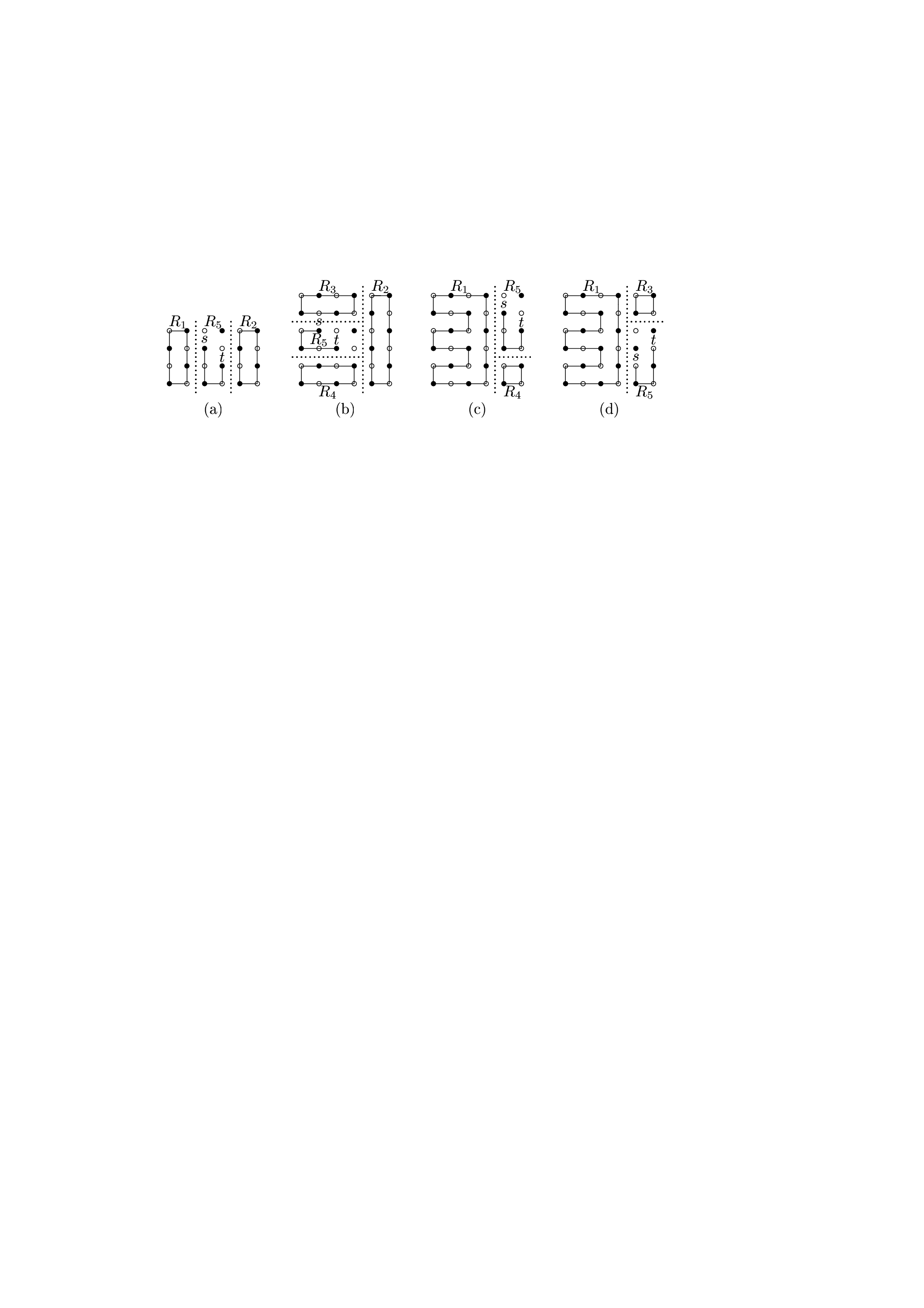}
  \caption[]%
  {\small Rectangular grid graph in which a peeling operation is not proper.}
  \label{bs}
\end{figure}
\begin{figure}[tb]
  \centering
  \includegraphics[scale=1]{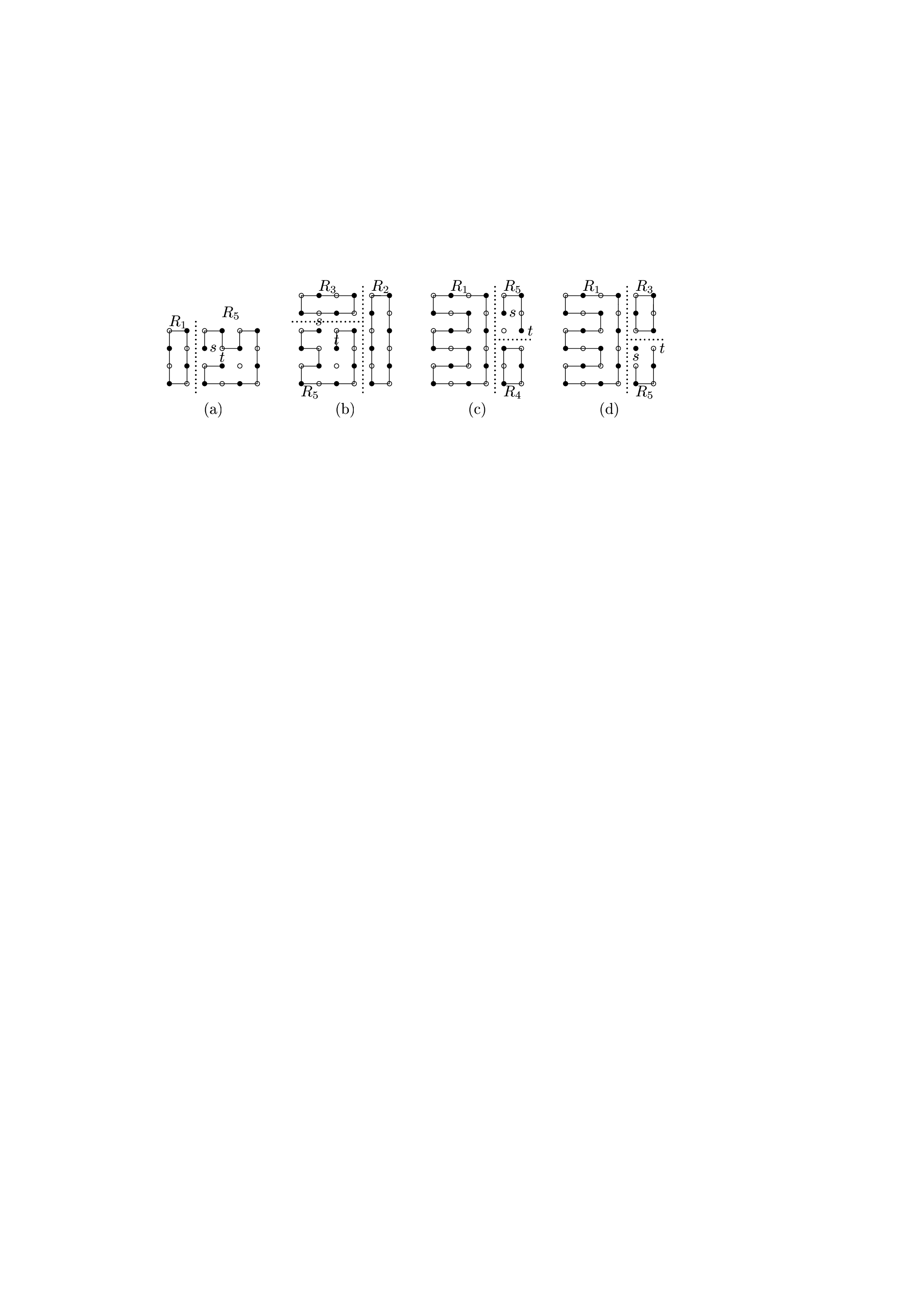}
  \caption[]%
  {\small }
  \label{bj}
\end{figure}
\begin{lem} \label{Lemma:7} For the longest path problem $P(R(n,m),s,t)$, where
$R(m,n)$ is an $even \times even$ rectangular grid graph, a peeling
operation on $R(m,n)$ is proper if and only if $P(R(n,m),s,t)$ is
not cases in $(F1^{'})$ and $(F2^{'})$.
\end{lem}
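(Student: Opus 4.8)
The plan is to strip $R$ down to the residual graph $R_5$, reduce properness to a statement about $R_5$ alone, and then classify the shapes $R_5$ can take under a peeling.

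First I would rewrite Definition 3.4: since $R_1,\dots,R_4$ are even-sized, $\sum_{i=1}^{4}|R_i|$ is even and $\sum_{i=1}^{4}|R_i| = mn-|R_5|$, so the peeling is proper if and only if $U(R_5,s,t)-|R_5| = U(R(m,n),s,t)-mn$. Because $R(m,n)$ is $even\times even$ with $m\ge n>2$ it is not a $1$-, $2$- or $3$-rectangle and is not odd-sized, so by the upper-bound formulas it falls under $(C0)$ or $(C1)$ and $U(R(m,n),s,t)=mn$ when $s,t$ have different colours and $mn-1$ when they have the same colour. Hence properness is equivalent to $R_5$ lying in the same one of $(C0)/(C1)$ as $R$, i.e. (again by the upper-bound formulas, using Theorem~\ref{theorem:2} to know these bounds are attained) to $U(R_5,s,t)=|R_5|$ (resp. $|R_5|-1$).

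Next I would use Lemma~\ref{Lemma:5}: $R_5$ is again $even\times even$ and $s,t$ keep their colours, hence their colour-compatibility. In particular the side lengths of $R_5$ are even, so $R_5$ is neither a $1$-rectangle nor a $3$-rectangle, and therefore $(F1)$ and $(F3)$ fail; and unless $R_5$ is a $2$-rectangle, $(F2^{*})$ fails as well, which together with colour-compatibility places $R_5$ in the same case as $R$, so the peeling is proper. Thus properness can only fail when the horizontal separations reduce $R_5$ to two rows (forcing $|s_y-t_y|\le 1$) and $(F2^{*})$ holds in $R_5$. Writing $R_5=R(m_5,2)$ and using $s_x\le t_x$: if $s_x=t_x$ then antipodality ($s_x\le 2\le m_5-1\le t_x$) forces $m_5=2$, and a one-line evaluation of the $2$-rectangle formula gives $U(R_5,s,t)=|R_5|$ or $|R_5|-1$ according to the colours of $s,t$, so that subcase is proper; and if $s_x=t_x-1$ with $s_y\ne t_y$ then antipodality forces $m_5\in\{2,4\}$, with $m_5=2$ again proper (a diagonal of $R(2,2)$) and $m_5=4$ — that is, $s_x=2$, $t_x=3$, $\{s_y,t_y\}=\{1,2\}$ — the only improper configuration, since there $U(R_5,s,t)=\max(t_x+s_x,\,2m_5-t_x-s_x+2)=5<|R_5|-1$.

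Finally I would translate ``the peeling produces the improper configuration $R_5=R(4,2)$'' back to $R$. There $s,t$ necessarily have the same colour (so $R$ is in $(C1)$), $t_y=s_y\pm1$, and $s_x\ne t_x$; reading off the two cases $s_y<t_y$ ($s$ black) and $s_y>t_y$ ($s$ white) by the same parity bookkeeping used in the Case~1/Case~2 analysis of Lemma~\ref{Lemma:6}, these are exactly $(F1^{'})$ and $(F2^{'})$, which yields both implications. The step I expect to be the main obstacle is precisely this last reduction: showing that when $R$ satisfies $(F1^{'})$ or $(F2^{'})$ the peeling is forced — via the ``boundary sizes $>1$ or empty'' constraint, which forbids peeling off a single row or column — to squeeze $R_5$ down to the bad $R(4,2)$ rather than to a harmless $R(2,2)$ (or to a still-large block), and conversely that if neither condition holds no peeling can land there. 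This is an exhaustive but routine case split on the parities and corner positions of $s$ and $t$, entirely parallel to the argument behind Lemma~\ref{Lemma:6}, but it is where all the real content of the lemma sits.
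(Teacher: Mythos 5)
The paper itself states this lemma without any proof (it passes directly to the boundary-adjustment discussion), so there is no argument of the authors to compare yours against. Your reduction is sound and is already more than the paper offers: rewriting Definition~3.4 as $U(R_5,s,t)-|R_5|=U(R(m,n),s,t)-mn$, using Lemma~\ref{Lemma:5} to keep $R_5$ even$\times$even with the colours of $s,t$ unchanged, observing that $U(R(m,n),s,t)\in\{mn,mn-1\}$ under $(C0)/(C1)$, and concluding that properness can only fail when $R_5$ is a $2$-rectangle satisfying $(F2^{*})$, which antipodality then pins down to $R_5=R(4,2)$ with $s,t$ diagonally adjacent in the two middle columns. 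Those computations are correct.

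The genuine gap is exactly the step you defer, and it is not the routine bookkeeping you hope for. First, whether the bad $R(4,2)$ is produced at all is not determined by Definition~3.3 (which does not even force the peeled strips to have even width in the even$\times$even case); you must commit to the concrete trimming rules $r_1,\dots,r_4$ of Section~4, and under those rules the vertical direction is decisive: with $s$ black, $s_x$ even and $t=(s_x+1,s_y-1)$ one gets $n_5=4$ and a proper peeling, whereas $t=(s_x+1,s_y+1)$ gives the improper $R(4,2)$ --- so the colour/direction coupling in $(F1^{'})$, $(F2^{'})$ is precisely what has to be computed, not asserted ``by the same parity bookkeeping''. Second, the equivalence with the conditions as literally written cannot be proved, because $(F1^{'})$ and $(F2^{'})$ only require $s_x\neq t_x$: for example $s$ black, $t_y=s_y+1$, $t_x=s_x+5$ satisfies $(F1^{'})$, yet the peeling leaves a long $2$-rectangle in which $(F2^{*})$ fails, so that peeling is proper and the ``only if'' direction breaks. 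Your own characterization shows the improper case needs $t_x=s_x+1$ with $s_x$ even (given $s_x\le t_x$), i.e.\ $s,t$ diagonally adjacent with the stated colour/direction pattern; to finish you must restate $(F1^{'})$, $(F2^{'})$ in that precise form (as the paper presumably intends) and carry out the explicit parity computation of $m_5$ and $n_5$. Until that is done, the final equivalence --- which, as you say, is where the content of the lemma sits --- remains unproved.
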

When a peeling operation is not proper it can be made proper by
adjustment the peeling boundaries. In that case, if $R_{1},\ R_{2},
\ R_{3}$ and $R_{4} $ are empty, then $ R_{5}$ is 2-rectangle that
is in case (F$2^*$). Therefore, without loss of generality, we
assume $R_{1},\ R_{2}, \ R_{3}$ or $R_{4} $ is not empty. If
rectangular grid subgraphs $R_{3}$ and $R_{4}$ are empty, then we
move one column (or two columns when $R_{1}$ or $R_{2}$ is a
2-rectangle) from $R_{1}$ or $R_{2}$ to $R_{5}$ such that $R_{1}$ or
$R_{2}$ is still even-sized rectangular grid graphs; see Figure
\ref{bj}(a). If $R_{1},\ R_{2}, \ R_{3}$ and $R_{4}$ (or $R_{3}$ and
$R_{4}$) is not empty, then we move one row (or two rows when
$R_{3}$ or $R_{4}$ is 2-rectangle) from $R_{3}$ or $R_{4}$ to
$R_{5}$ (Figure \ref{bj}(b)), or move the bottom row to $R_{4}$
(Figure \ref{bj}(c)) or move the upper row to $R_{3}$ (Figure
\ref{bj}(d)), such that $R_{3}$ or $R_{4}$ is still even-sized
rectangular grid graphs.
\par After a peeling operation on $R(m,n)$, we construct longest paths
in $R_5(m_5,n_5)$. Consider the following cases for $R_5(m_5,n_5)$:
\begin{itemize}
\item[(a)] $m_{5},n_{5}\leq 3$.
\item[(b)] $m_{5},n_{5}$ are even, and either $m_{5}\geq4$ or $n_{5}\geq4$;
\item[(c)] $m_{5},n_{5}$ are odd, and either $m_{5}\geq5$ or $n_{5}\geq5$;
\item[(d)] $m_{5}$ is even and $n_{5}$ is odd, and either $m_{5}\geq4$ or $n_{5}\geq5$.
\end{itemize}
For case (a), we showed that when $n=1,2$ the problem can be solved
easily. For $m,n=3$ the longest paths of all the possible problems
are depicted in Figure \ref{k} (the isomorphic cases are omitted).
 \begin{figure}
  \centering
\includegraphics[scale=1]{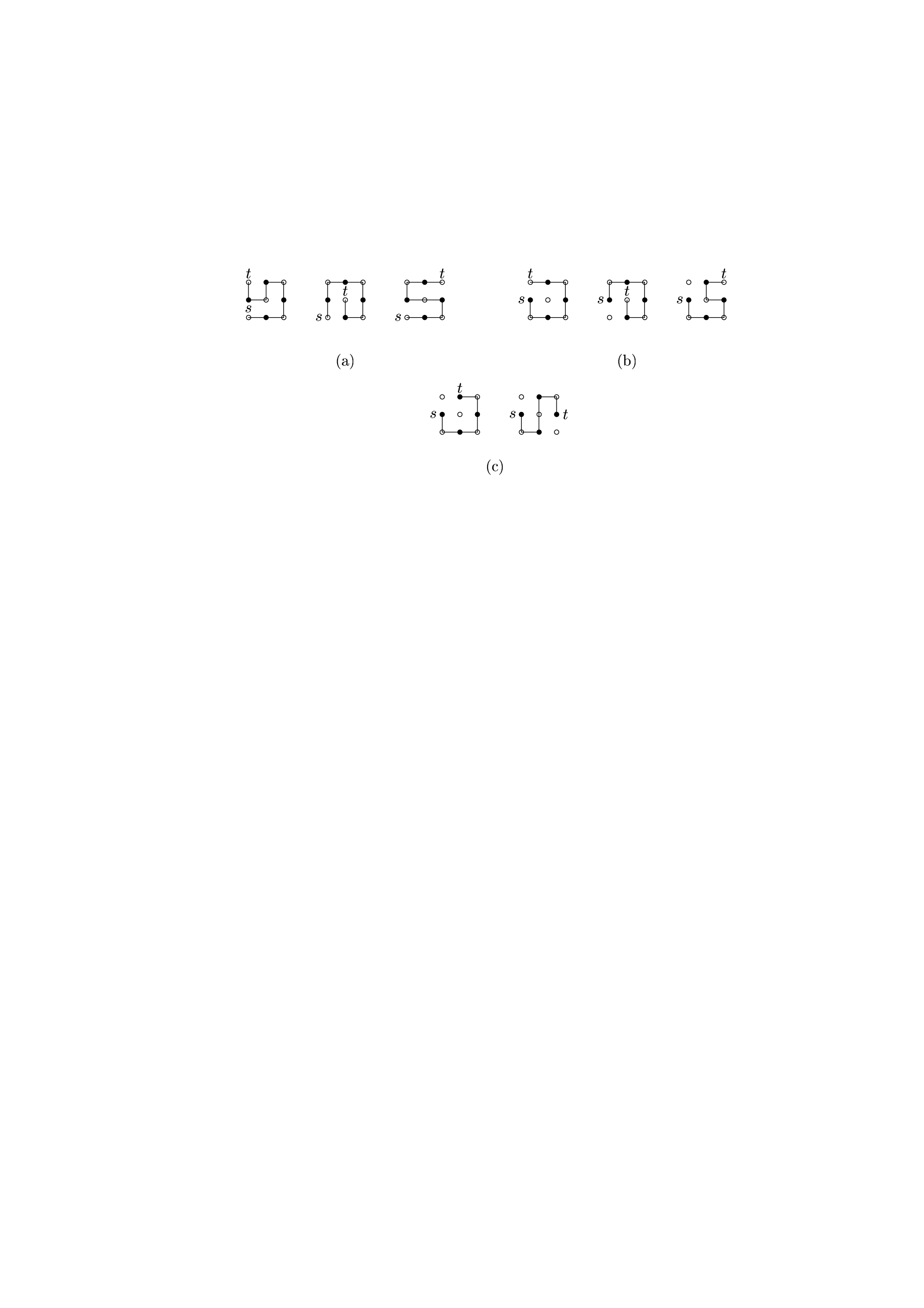}
  \caption[]%
  {\small For $n=m=3$, (a) $s$ and $t$ are white, then there is Hamiltonian path,
   (b) $s$ and $t$ have different colors, then there is a path with $U(R,s,t)=mn-1$
   and (c) $s$ and $t$ are black, then there is a path with $U(R,s,t)=mn-2$.}
  \label{k}
\end{figure}
\par For cases (b), (c) and (d) we use
the definition of trisecting.\\

\noindent\textbf{Definition 3.5.} \cite{CST:AFAFCHPIM} Two
separations of $R_{5}$ that partition it into three rectangular grid
subgraphs $R^{s}_{5}$, $R^{t}_{5}$ and $R^{m}_{5}$ is called
\textit{trisecting}, if
\begin{itemize}
\item[$(i)$.] $R^{s}_{5}$ and $R^{t}_{5}$ are a 2-rectangle, and
\item[$(ii)$.] $s \in V(R^{s}_{5})$ and $t \in V(R^{t}_{5})$.
\end{itemize}

A trisecting can be done by two ways horizontally and vertically. If
$m_{5}<4$ or $m_{5},n_{5}\geq 4$, then trisecting is done
horizontally, if $n_{5}<4$, then trisecting is done vertically. \\

\noindent\textbf{Definition 3.6.} A corner vertex on the boundary of
$R^{s}_{5}$ $($resp., $R^{t}_{5})$ facing $R^{m}_{5}$ is called a
\textit{junction vertex} of $R^{s}_{5}$ $($resp., $R^{t}_{5})$ if
either
\begin{itemize}
\item [(i)] The condition $(C0)$ holds and it has different color from $s$ and
$t$, or
\item [(ii)] One of the conditions $(C1)$ or $(C2)$ hold and
$U(R_{5}^{s},s,p)+U(R_{5}^{m},m,m^{'})+U(R_{5}^{t},q,t)=U(R(m,n),s,t)$.
Where $p$ is one of the corner vertices of $R^{s}_5$, $q$ is one of
the corner vertices of $R^{t}_5$, and $m$ and $m^{'}$ are two of the
corner vertices of $R^{m}_5$ facing $R^{s}_5$ and $R^{t}_5$,
respectively.
\end{itemize}

In Figure \ref{ca}, $p_1$ and $p_2$, $q_{1}$ and $q_{2}$, $m_{1}$,
$m_{2}$, $m_{3}$ and $m_{4}$ are junction vertices in $R^{s}_5$,
$R^{t}_5$ and $R^{m}_5$, respectively.
\begin{figure}[tb]
  \centering
  \includegraphics[scale=1]{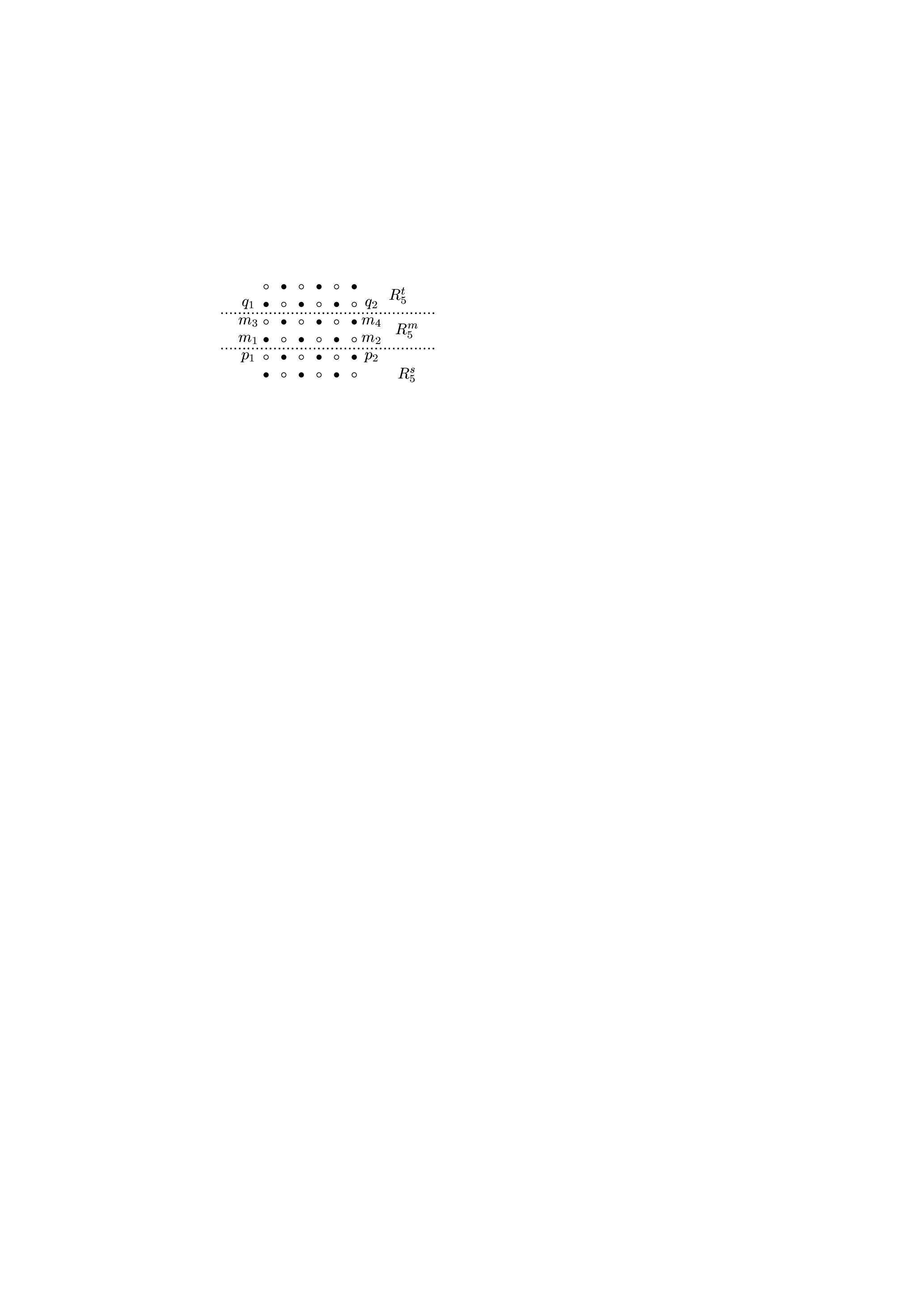}
  \caption[]%
  {\small A trisecting on $R(6,6)$.}
  \label{ca}
\end{figure}
Existence of junction vertices has been proved for condition $(C0)$
in \cite{CST:AFAFCHPIM}, in this paper we only consider conditions
$(C1)$ and $(C2)$.
\begin{lem} \label{Lemma:7} Performing a trisecting on $R_{5}$, where $m,n> 3$, assuming condition $(C1)$ or $(C2)$
holds, if $n_5=4$, and $s$ and $t$ facing the common border of
$R^{s}_{5}$ and $R^{t}_{5}$, then there is no junction vertex for
$R^{s}_{5}$ and $R^{t}_{5}$, otherwise $R^{s}_{5}$ and $R^{t}_{5}$
have at least one junction vertex.
\end{lem}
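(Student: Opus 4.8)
The plan is to run the same accounting as the treatment of condition $(C0)$ in \cite{CST:AFAFCHPIM}, but while keeping track of the \emph{deficiency} $\delta := m_5 n_5 - U(R_5,s,t)$, which equals $1$ in case $(C1)$ and $2$ in case $(C2)$. First I would normalise the data: by the parity/colour–preservation lemma stated right after Definition~3.3, $R_5$ has the same parity as $R$ and $s,t$ keep their colours, so $R_5$ satisfies the same condition $(C1)$ or $(C2)$; combining this with Theorem~\ref{theorem:2} and the upper-bound table gives $U(R_5,s,t)=m_5 n_5-1$ in case $(C1)$ and $m_5 n_5-2$ in case $(C2)$. Moreover, since $m,n>3$ and $(F3)$ can only hold for a $3$-rectangle, case $(C2)$ forces $R$ — hence $R_5$ — to be odd$\times$odd, so $n_5$ is odd; in particular the hypothesis $n_5=4$ can occur only inside case $(C1)$, where $\delta=1$. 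This is what makes the dichotomy in the statement split along $\delta$.

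Next I would pin down the three pieces produced by the trisecting. The pieces $R_5^s$ and $R_5^t$ are $2$-rectangles, so the lemma on $1$- and $2$-rectangles together with the upper-bound table gives their longest-path lengths exactly: with one endpoint fixed and the other at a chosen corner on the side facing $R_5^m$, the number of missed vertices is $0$ when the two endpoints are colour-compatible and $(F2^*)$ fails, is $1$ when they are same-coloured and $(F2^*)$ fails, and is $2k-\max(\cdot,\cdot)$ in the $(F2^*)$ case ($2k$ being the size of the $2$-rectangle). For the middle piece, the construction (Definitions~3.3 and~3.5) shows $R_5^m$ is either empty or a rectangle with both side-lengths $\ge 1$; when it is even-sized with both sides $\ge 2$, Lemma~\ref{Lemma:1} gives a Hamiltonian cycle and deleting one boundary edge on a side facing $R_5^s$ or $R_5^t$ turns it into a Hamiltonian path of $R_5^m$ between a corner $m$ on the $R_5^s$-side and a colour-compatible corner $m'$ on the $R_5^t$-side; when $R_5^m$ is odd-sized it is odd$\times$odd with all four corners white, so it again has a Hamiltonian path between such $m,m'$. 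In either case $U(R_5^m,m,m')=|R_5^m|$ for a suitable choice.

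For the ``otherwise'' direction I would produce a junction vertex. Since $|R_5^s|+|R_5^m|+|R_5^t|=m_5 n_5$ and $R_5^m$ can be made to contribute all of its vertices, the identity of Definition~3.6(ii) reduces to choosing a corner $p$ of $R_5^s$ (adjacent to $m$) and a corner $q$ of $R_5^t$ (adjacent to $m'$) so that the two $2$-rectangles together miss exactly $\delta$ vertices. In case $(C1)$ ($\delta=1$) it is enough to pick $p,q$ of \emph{opposite} colours, with $m,m'$ colour-compatible and $p\!-\!m\!-\!\cdots\!-\!m'\!-\!q$ connectable; then, since $s$ and $t$ share a colour, exactly one of the pairs $(s,p)$, $(q,t)$ is colour-compatible, so the two halves miss exactly one vertex in total. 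Running through the candidate corners on each side, one checks such a pair is always available unless $n_5=4$, $R_5^m$ is empty and $s,t$ both lie on the common border — exactly the excluded configuration. In case $(C2)$ ($\delta=2$), $R_5$ is odd$\times$odd and the trisecting is horizontal, so $R_5^s=R(m_5,2)$, $R_5^t=R(m_5,2)$ with $m_5$ odd; then both candidate corners on each side are black, like $s$ and $t$, so each half misses exactly one vertex, for a total of $\delta=2$, while the corresponding $m,m'$ are forced to be white and hence colour-compatible in $R_5^m$. Along the way one must verify that the $(F2^*)$ case does not spoil the count — it arises only when $s$ (or $t$) is collinear with the chosen corner along the border direction, and is avoided by passing to the other candidate corner.

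Finally, for the exceptional configuration I would show that no candidate corner works. Here $\delta=1$, $n_5=4$ and $R_5^m$ is empty, so $R_5$ is a $4\times m_5$ rectangle split into its two flanking $2$-rectangles $R_5^s,R_5^t$ sharing the common border. The hypothesis that $s$ and $t$ face the common border, together with the antipode condition ($s_x\le 2$ and $t_x\ge m_5-1$ in $R_5$) and with $s,t$ being same-coloured, forces $s$ to be the corner of $R_5^s$ lying on the common border and $t$ to be the diagonally opposite corner of $R_5^t$ on that border. Consequently there are only two candidate corner pairs $(p,q)$, one at each end of the common border, and each of them has $p=s$ or $q=t$; for such a pair the path inside $R_5^s$ (respectively $R_5^t$) degenerates to the single vertex, so the best combined path has at most $1+|R_5^t|$ (respectively $|R_5^s|+1$) vertices, which is strictly less than $U(R_5,s,t)=m_5 n_5-1$ because $m_5\ge 4$. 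Hence Definition~3.6(ii) can never be met and there is no junction vertex. I expect this last step to be the main obstacle: one must be certain that the antipode and colour constraints really leave no non-degenerate candidate corner and that no ``lucky'' $(F2^*)$-type detour through one of the $2$-rectangles rescues the count, so the bookkeeping on corner colours has to be done with care.
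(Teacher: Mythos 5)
Your existence half (the ``otherwise'' direction) is where the paper's proof does essentially all of its work: an explicit case analysis over the parity type of $R_5$ (even$\times$even, odd$\times$odd, even$\times$odd) and the three possible positions of $s$ and $t$ relative to the corners facing $R^{m}_{5}$, in each case exhibiting, via Theorem \ref{theorem:2}, concrete Hamiltonian or near-Hamiltonian subpaths of $R^{s}_{5}$, $R^{m}_{5}$, $R^{t}_{5}$ whose lengths sum to $U(R(m,n),s,t)$. Your proposal replaces this with ``running through the candidate corners on each side, one checks such a pair is always available,'' which is precisely the statement to be proved, and the two concrete claims you do make in that sketch are wrong as stated: under $(C1)$ with $R$ odd-sized, $s$ and $t$ have \emph{different} colours (your bookkeeping assumes they share a colour, which is only the even-sized subcase of $(C1)$), and under $(C2)$ the trisecting is \emph{vertical} when $n_{5}=3$, so $R^{s}_{5}$ and $R^{t}_{5}$ are not horizontal $2$-rectangles $R(m_5,2)$ there and your corner-colour argument has to be redone for that geometry.

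In the non-existence half your key step --- that the antipode condition together with the colour constraints \emph{forces} $s$ to be the corner of $R^{s}_{5}$ on the common border and $t$ the diagonally opposite corner --- is false. Being antipodes only requires $s_x\le 2$ and $t_x\ge m_5-1$: in $R_5=R(m_5,4)$ with $m_5$ even, the vertices $s=(2,2)$ and $t=(m_5-1,3)$ are antipodes, same-coloured (both white), and both face the common border, yet neither is a corner; and there, by exactly your accounting, the adjacent corner pair $p=(1,2)$, $q=(1,3)$ gives $U(R^{s}_{5},s,p)+U(R^{t}_{5},q,t)=2m_5+(2m_5-1)=m_5n_5-1=U(R_5,s,t)$, i.e.\ a junction pair \emph{does} exist. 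So your degenerate-pair/undershoot argument does not establish the claimed non-existence in the generality you set up; without the forcing claim it collapses. The paper's own argument is different: it works with the configuration of Figure \ref{cc}(a),(b), where $s$ and $t$ themselves occupy border corners, so the only surviving candidates are $p_2$ and $q_1$, and both pieces admit Hamiltonian paths to them, making the sum equal to $m_5n_5$, an \emph{overshoot} of $U(R(m,n),s,t)$ by one rather than the undershoot you compute. To repair your version you would have to either show that the peeling and trisecting rules genuinely exclude placements such as $s=(2,2)$ (they do not appear to), or read the hypothesis ``$s$ and $t$ facing the common border'' in the narrower, figure-specific sense and say so explicitly before doing the corner bookkeeping.
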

\begin{proof}
Consider Figure \ref{cc}(a) and (b), where $n_5=4$ and two vertices
$s$ and $t$ facing the common border $R^{s}_{5}$ and $R^{t}_{5}$. In
this case, the only two vertices $p_2$ and $q_1$ may be junction
vertices. By Theorem \ref{theorem:2}, there exists a Hamiltonian
path from $s$ to $p_2$ and from $q_1$ to $t$ in $R^{s}_{5}$ and
$R^{t}_{5}$, respectively, and
$U(R_{5}^{s},s,p_{2})+U(R_{5}^{t},q_{1},t)\neq U(R(m,n),s,t)$.
Hence, neither $R^{s}_{5}$ nor $R^{t}_{5}$ has a junction vertex at
all. Now for other cases, we show that $R^{s}_{5}$ and $R^{t}_{5}$
have at least one junction vertex.\\
In case (b), $s$ and $t$ have the same color (white or black), and
two corner vertices on the boundary of $R^{s}_{5}$ (resp.,
$R^{t}_{5}$) facing $R^{m}_{5}$ are different color and also
$R^{m}_{5}$ is a $k-$rectangle that $k$ is empty or $k\geq 2$ and
even. We consider the following three cases for $s$ and $t$:
\par Case 1. Both $s$ and $t$ are the corner vertices on the boundary
of $R^{s}_{5}$ and $R^{t}_{5}$ facing $R^{m}_{5}$; see Figure
\ref{cc}(c). By Theorem \ref{theorem:2}, there exists a Hamiltonian
path from $s$ to $p_{2}$ and from $q_{1}$ to $t$ in $R^{s}_{5}$ and
$R^{t}_{5}$, respectively, and a path from $m_{3}$ to $m_{2}$ which does not pass through a vertex in $R^{m}_{5}$. Therefore,
$U(R_{5}^{s},s,p_{2})+U(R_{5}^{m},m_{3},m_{2})+U(R_{5}^{t},q_{1},t)=U(R(m,n),s,t)$
and hence both $R^{s}_5$ and $R^{t}_5$ have a unique junction
vertex.
\par Case
2. $s$ is the corner vertex on the boundary of $R^{s}_{5}$ facing
$R^{m}_{5}$; see Figure \ref{cc}(d). By Theorem \ref{theorem:2},
there exists a Hamiltonian path from $s$ to $p_{2}$, from $q_{1}$ to
$t$ and a path from $m_3$ to $m_2$ which does not pass through a
vertex, or  a Hamiltonian path from $s$ to $p_{2}$, form  $m_{2}$ to $m_{4}$ and a path
from $q_{2}$ to $t$ which does not pass through a vertex. Therefore,
$U(R_{5}^{s},s,p_{2})+U(R_{5}^{m},m_{3},m_{2})+U(R_{5}^{t},q_{1},t)=U(R(m,n),s,t)$
and
$U(R_{5}^{s},s,p_{2})+U(R_{5}^{m},m_{2},m_{4})+U(R_{5}^{t},q_{2},t)=U(R(m,n),s,t)$
and hence $R^{s}_5$ has a unique junction vertex and $R^{t}_5$ have
two junction vertices (the same argument is also applied to $t$). In
this case, where $n_5=4$, both $R^{s}_5$ and $R^{t}_5$ have a unique
junction vertex; see Figure \ref{cc}(e).
\par Case 3. $s$ and $t$ are not the corner vertices on the
boundary of $R^{s}_{5}$ and $R^{t}_{5}$ facing $R^{m}_{5}$; see
Figure \ref{cc}(f). By Theorem \ref{theorem:2}, there exists a
Hamiltonian path from $s$ to $p_{1}$, $m_1$ to $m_3$ and a path from
$q_1$ to $t$ which does not pass through a vertex, or a Hamiltonian path from $s$ to
$p_{1}$, from $q_2$ to $t$ and a path from $m_1$ to $m_4$ which does
not pass through a vertex, or a Hamiltonian path form $m_2$ to $m_4$, from $q_2$ to $t$
and a path from $s$ to $p_{2}$ which does not pass through a vertex.
Therefore,
$U(R_{5}^{s},s,p_{1})+U(R_{5}^{m},m_{1},m_{3})+U(R_{5}^{t},q_{1},t)=U(R(m,n),s,t)$,
$U(R_{5}^{s},s,p_{1})+U(R_{5}^{m},m_{1},m_{4})+U(R_{5}^{t},q_{2},t)=U(R(m,n),s,t)$
and
$U(R_{5}^{s},s,p_{2})+U(R_{5}^{m},m_{2},m_{4})+U(R_{5}^{t},q_{2},t)=U(R(m,n),s,t)$
and hence both $R^{s}_{5}$ and $R^{t}_{5}$ have two junction
vertices.\\
\begin{figure}[tb]
  \centering
  \includegraphics[scale=1]{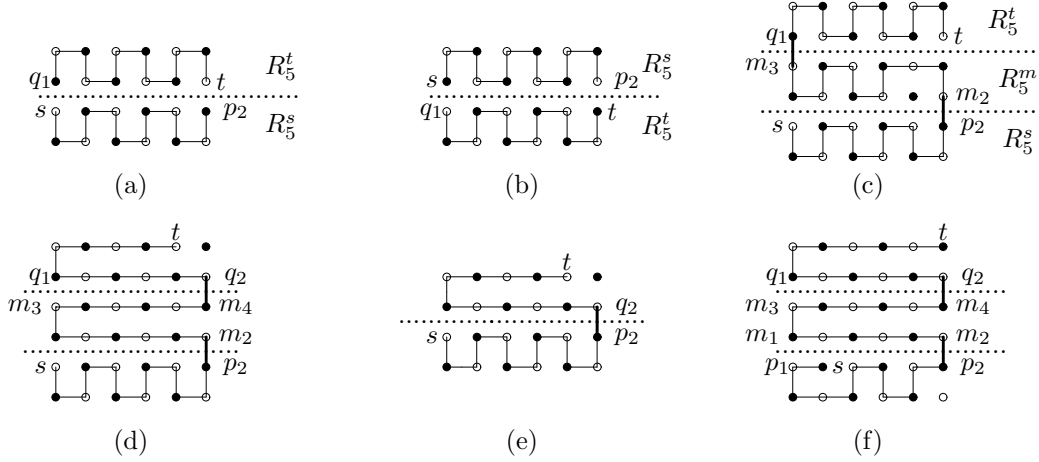}
  \caption[]%
  {\small A trisecting on $R(6,6)$, $R(6,4)$.}
  \label{cc}
\end{figure}
In case (c), $s$ and $t$ are black or different color, and two
corner vertices on the boundary of $R^{s}_{5}$ (resp., $R^{t}_{5}$)
facing $R^{m}_{5}$ are black and also $R^{m}_{5}$ is a $k-$rectangle
that $k\geq 1$ and odd. There are three cases for $s$ and $t$:
\par Case 1. Both $s$ and $t$ are the corner vertices on the boundary
of $R^{s}_{5}$ and $R^{t}_{5}$ facing $R^{m}_{5}$; see Figure
\ref{ccv}(a). Then $s$ and $t$ are black. By Theorem
\ref{theorem:2}, there exists a path from $s$ to $p_{2}$, from
$q_{1}$ to $t$ which does not pass through a vertex in $R^{s}_{5}$
and $R^{t}_{5}$, respectively, and a Hamiltonian path from $m_{3}$
to $m_{2}$ in $R^{m}_{5}$. Therefore,
$U(R_{5}^{s},s,p_{2})+U(R_{5}^{m},m_{3},m_{2})+U(R_{5}^{t},q_{1},t)=U(R(m,n),s,t)$
and hence both $R^{s}_5$ and $R^{t}_5$ have a unique junction
vertex.
\par Case 2. $s$ is the corner vertex on the boundary of $R^{s}_{5}$
facing $R^{m}_{5}$, then $s$ is black and $t$ is black or white; see
Figure \ref{ccv}(b). By Theorem \ref{theorem:2}, there exists a path
from $s$ to $p_{2}$ and from $q_{1}$ (or $q_{2}$) to $t$, where $t$
is black, which does not pass through a vertex, and a Hamiltonian
path from and $m_{2}$
to $m_{4}$ (or from $m_{3}$ to $m_{2}$), or a Hamiltonian path from $q_1$ (or $q_2)$ to $t$, where $t$ is white and $m_{2}$
to $m_{4}$ (or from $m_{3}$ to $m_{2}$) and a path
from $s$ to $p_{2}$ which does not pass through a vertex . Therefore,
$U(R_{5}^{s},s,p_{2})+U(R_{5}^{m},m_{3},m_{2})+U(R_{5}^{t},q_{1},t)=U(R(m,n),s,t)$
and
$U(R_{5}^{s},s,p_{2})+U(R_{5}^{m},m_{2},m_{4})+U(R_{5}^{t},q_{2},t)=U(R(m,n),s,t)$
and hence $R^{s}_5$ has a unique junction vertex and $R^{t}_5$ have
two junction vertices (the same argument is also applied to $t$). In
this case, where $n_5=5$, both $R^{s}_{5}$ and $R^{t}_{5}$ a unique
junction vertex; see Figure \ref{ccv}(c).
\par Case 3. $s$ and $t$ are not the corner vertices on the
boundary of $R^{s}_{5}$ and $R^{t}_{5}$ facing $R^{m}_{5}$; see
Figure \ref{ccv}(d). By Theorem \ref{theorem:2}, there exists a
Hamiltonian path from $s$ to $p$, $m$ to $m^{'}$ and $q$ to $t$,
where $s$ (or $t$) is white, and a path form $s$ to $p$ and $q$ to
$t$ which does not pass through a vertex where $s$ (or $t$) is
black, $p$ is $p_1$ or $p_2$, $q$ is $q_1$ or $q_2$, $m$ is $m_1$ or $m_2$ and $m^{'}$ is $m_3$ or $m_4$. Therefore,
$U(R_{5}^{s},s,p)+U(R_{5}^{m},m,m^{'})+U(R_{5}^{t},q,t)=U(R(m,n),s,t)$
and hence both $R^{s}_{5}$ and $R^{t}_{5}$ have two junction
vertices.\\
In case (d), if $n_{5} > 3$, the trisecting is performed
horizontally, and the claim is proved by applying the same argument
for case (b); see Figure \ref{ccv}(e). If $n_{5}=3$, the trisecting
is performed vertically
 and also
two corner vertices on the boundary of $R^{s}_{5}$ facing
$R^{m}_{5}$ are black. Therefore, the claim is proved by applying
the same argument for case (c).
\end{proof}

\begin{figure}[htb]
  \centering
  \includegraphics[scale=1]{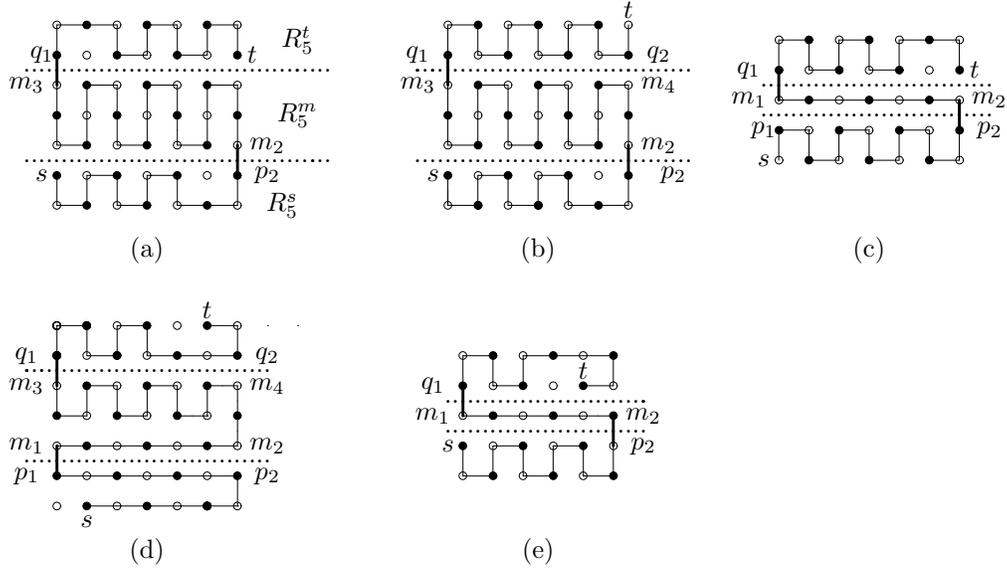}
  \caption[]%
  {\small A trisecting on $R(7,7)$, $R(7,5)$ and $R(6,5)$.}
  \label{ccv}
\end{figure}
After trisecting, we construct a longest path in $R^{s}_5$,
$R^{m}_5$ and $R^{t}_5$ between $s$ and $p$, $m$ and $m^{'}$ and $q$
and $t$, respectively. In the case that none of $R^{s}_{5}$ and
$R^{t}_{5}$ have junction vertices (when $n_{5}=4$ and both $s$ and
$t$ facing the common border $R^{s}_{5}$ and $R^{t}_{5}$), we
construct a longest path in $R^{s}_{5}$ (resp., $R^{t}_5$) between
$s$ (resp., $t)$ and a none-corner vertex of the boundary facing
$R^{t}_{5}$ (resp., $R^{s}_5)$; see Figure \ref{cd1}. At the end,
the longest paths in $R_5$ are combined through the junction
vertices; see Figures \ref{cc} and \ref{ccv}.\par Then we construct
Hamiltonian cycles in rectangular grid subgraphs $R_1$ to $R_4$, by
Lemma \ref{Lemma:1}; see Figure \ref{cd}. Then combine all
Hamiltonian cycles to a single Hamiltonian cycle. \par Two
non-incident edges $e_1$ and $e_2$ are parallel, if each end vertex
of $e_1$ is adjacent to some end vertex of $e_2$. Using two parallel
edges $e_1$ and $e_2$ of two Hamiltonian cycles (or a Hamiltonian
cycle and a longest path), such as two darkened edges of Figure
\ref{cm}(a), we can combine them as illustrated in Figure
\ref{cm}(b) and obtain a large Hamiltonian cycle.\par Combining the
resulted Hamiltonian cycle with
the longest path of $R_5$ is done as in Figure \ref{cn}. \\

\begin{figure}[htb]
  \centering
  \includegraphics[scale=1]{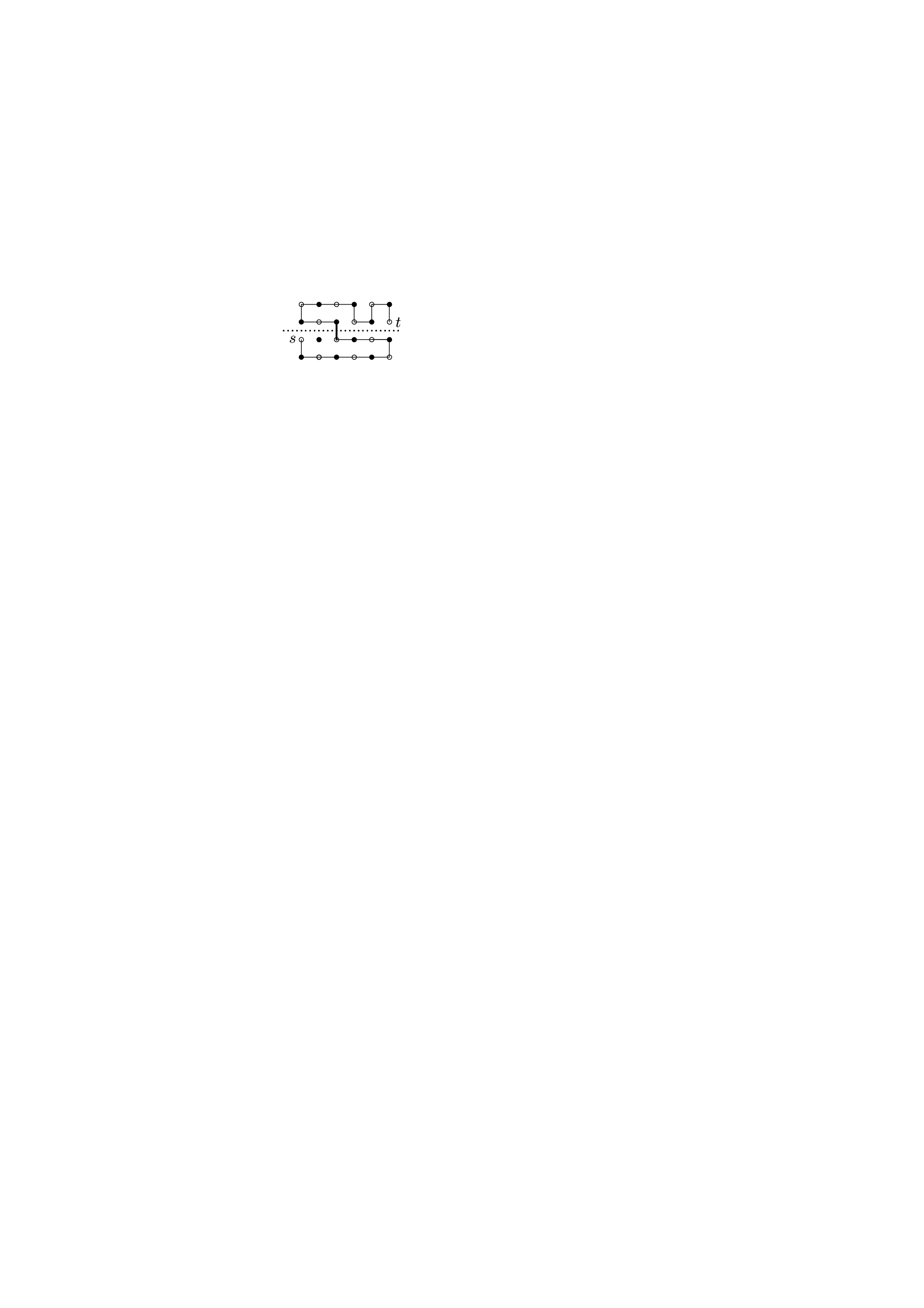}
  \caption[]%
  {\small the Longest path in $R(6,4)$.}
  \label{cd1}
\end{figure}

\begin{figure}[htb]
  \centering
  \includegraphics[scale=1]{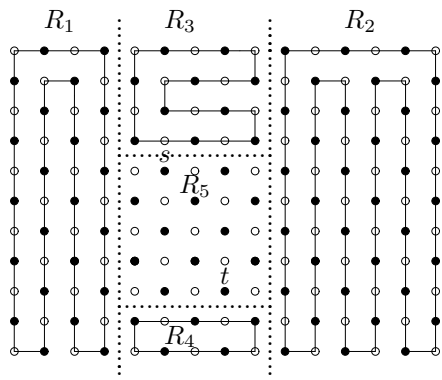}
  \caption[]%
  {\small Hamiltonian cycles in $R_1$ to $R_4$.}
  \label{cd}
\end{figure}

\begin{figure}[htb]
  \centering
  \includegraphics[scale=1]{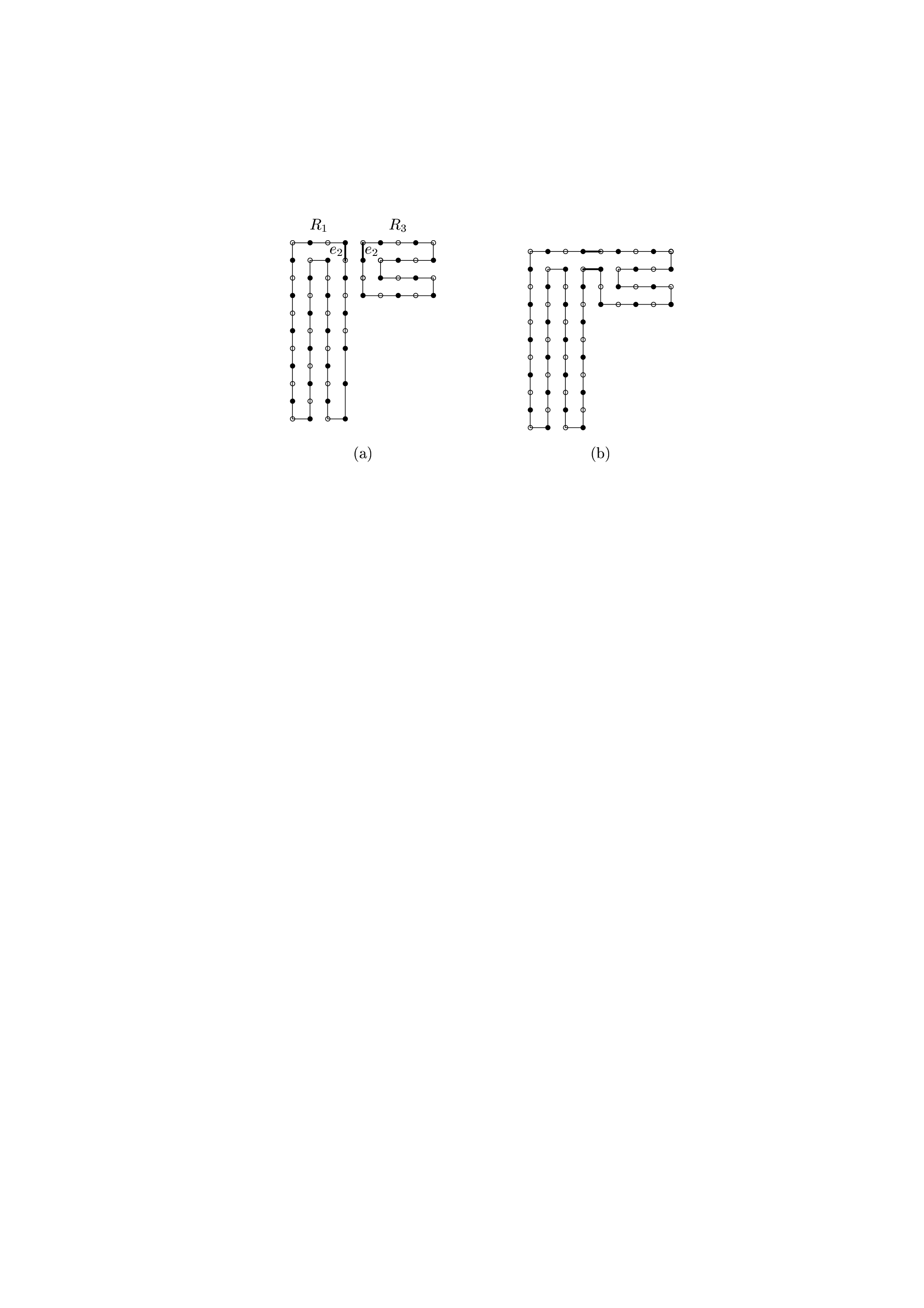}
  \caption[]%
  {\small Combining two Hamiltonian cycles.}
  \label{cm}
\end{figure}

\begin{figure}[htb]
  \centering
  \includegraphics[scale=1]{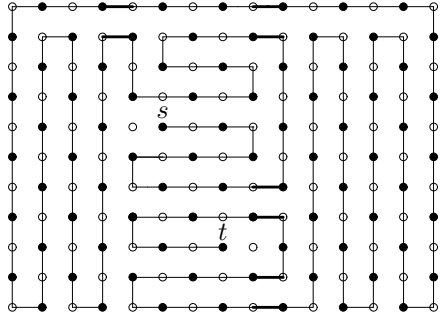}
  \caption[]%
  {\small The longest path between $s$ and $t$.}
  \label{cn}
\end{figure}

Considering all of the above, we get the algorithm of finding a
longest path in rectangular grid graphs, as shown in Algorithm
\ref{alg:1}.
\newcommand{\LET}{\STATE \textbf{let }}
\newcommand{\SET}{\STATE \textbf{set }}
\newcommand{\PROC}{\STATE \textbf{procedure }}
\newcommand{\RET}{\STATE \textbf{return }}
\begin{algorithm}
\caption{The longest path algorithm} \label{alg:1}
\begin{algorithmic}
\PROC LongestPath$(R(m,n),s,t)$
\end{algorithmic}
\begin{algorithmic}
{\small \STATE \textbf{Step 1.} By a peeling operation, $R(n,m)$
partitions into five disjoint rectangular grid subgraphs $R_{1}$ to
$R_{5}$, such that $s,t\in R_{5}$ \STATE \textbf{Step 2.} Finding
longest path between $s$ and $t$ in $R_{5}$. 
\STATE \textbf{Step 3.} Construct Hamiltonian cycles in rectangular
grid subgraphs $R_{1}$ to $R_{4}$.
 \STATE \textbf{Step 4.} Construct a longest
path between $s$ and $t$ by combine all Hamiltonian cycles and a
longest path.}
\end{algorithmic}
\end{algorithm}

Consider the pseudo-code of our algorithm in Algorithm \ref{alg:1}.
The step 1 dose only a constant number of partitioning, during the
peeling operation, which is done in constant time. The step 2
trisects $R_5$ which requires also a constant number of
partitioning. Then finds a longest path in $R_5$ by merging paths of
the partitions which can be done in linear time. The step 3 finds
Hamiltonian cycles of $R_1$ to $R_4$ which is done in linear time.
The step 4 which combines the Hamiltonian cycles and the longest
path requires only constant time. Therefore, in total our sequential
algorithm has linear-time complexity.
\section{The parallel algorithm}
In this section, we present a parallel algorithm for the  longest
path problem. This algorithm is based on the sequential algorithm
presented in the previous sections. Our parallel algorithm runs on
every parallel machine, we do not need any inter-processor
connection in our algorithm. We assume there are $nm$ processors and
they work in SIMD mode. For simplicity, we use a two-dimensional
indexing scheme. Each vertex $v$ of the given rectangular grid graph
$R(m,n)$ is mapped to processor $(v_x,v_y)$. Each processor knows
its index, coordinates $s$ and $t$, and $m$ and $n$. \par The
peeling phase is parallelized easily, every processor calculates the
following four variables, in parallel \cite{CST:AFAFCHPIM}:

 $r_{1}=
  \begin{cases}
  s_{x} - 2;   &  \text s_{x}\bmod \ 2=0 \\
    s_{x} - 1;& \text otherwise
  \end{cases}$
  \\

$r_{2}=
  \begin{cases}
  t_{x} +1;   &  \text t_{x}\bmod \ 2=m\bmod\ 2 \\
    t_{x} +2;& \text otherwise
  \end{cases}$
  \\

$r_{3}=
  \begin{cases}
  min(s_{y} ; t_{y})-2; & \text min(s_{y} ; t_{y})\bmod\ 2 = 0 \\
   min(s_{y} ; t_{y})-1 ;& \text  otherwise
  \end{cases}$
  \\

$r_{4}=
  \begin{cases}
  max(s_{y} ; t_{y})+1; & \text max(s_{y} ; t_{y})\bmod\ 2 = n \bmod\ 2 \\
   max(s_{y} ; t_{y})+2 ;& \text  otherwise
  \end{cases}$
  \\
  \par
Where variables $r_1$, $r_2$, $r_3$ and $r_4$ correspond to the
right-most column number of $R_1$,  the left-most column number of
$R_2$, the bottom row number of $R_3$, and the top row number of
$R_4$, respectively. Then a processor can identify its
subrectangular by comparing its coordinates with these four
variables. In case (F1$^{'}$) and (F2$^{'}$), the boundary
adjustment can be done by simply decrementing $R_{1}$, $R_{2}$,
$R_{3}$ or $R_{4}$ or incrementing $R_{3}$ or $R_{4}$.

\par The trisecting phase is also parallelized in a similar manner. In the following we describe
how we parallelized the horizontal trisecting, in two cases when
$R(m,n)$ is even$\times$odd (or odd$\times$odd) and when it is
even$\times$even. In case $R(m,n)$
is even$\times$odd or odd$\times$odd, every processor simultaneously calculate the following two variables:\\
$l=
  \begin{cases}
  min(s_{y} ; t_{y}) \ & \text  min(s_{y} ; t_{y})\bmod\ 2 =
  0\\
   min(s_{y} ; t_{y})+1 & \text  min(s_{y} ; t_{y})\bmod\ 2 \neq
  0
  \end{cases}$\\
$r=
  \begin{cases}
  max(s_{y} ; t_{y}) \ & \text  max(s_{y} ; t_{y})\bmod\ 2 =
  0\\
  max(s_{y} ; t_{y})-1 & \text  max(s_{y} ; t_{y})\bmod\ 2 \neq
  0
  \end{cases}$\par
Where variables $l$ and $r$ correspond to the bottom row number of
$R^{s}_{5}$ (resp. $R^{t}_{5}$), and the top row number of
$R^{t}_{5}$ (resp. $R^{s}_{5}$), respectively.\par In case $R(m,n)$
is even$\times$even, every processor simultaneously calculate the
following two variables:
\\
$l=
  \begin{cases}
  min(s_{y} ; t_{y}) \ & \text  min(s_{y} ; t_{y})\bmod\ 2 =
  0\\
   min(s_{y} ; t_{y})+1 & \text  min(s_{y} ; t_{y})\bmod\ 2 \neq
  0
  \end{cases}$\\
$r=
  \begin{cases}
  max(s_{y} ; t_{y}) \ & \text  max(s_{y} ; t_{y})\bmod\ 2  \neq
  0\\
  max(s_{y} ; t_{y})-1 & \text  max(s_{y} ; t_{y})\bmod\ 2=
  0
  \end{cases}$\\

A similar method can be used to parallelize the vertically
trisecting.\\
After peeling and trisecting, all processors in the same
subrectangles simultaneously construct either a longest path,
Hamiltonian path or cycle according to the pattern associated with
the subrectangle. For constructing a Hamiltonian path in a
rectangular grid graph, we use the constant-time algorithm of
\cite{CST:AFAFCHPIM}. For constructing a Hamiltonian cycle in an
even-sized rectangle, we use the constant-time algorithm of
\cite{SST:EPPRAOM} in which every processor computes its successor
in the cycle. This algorithm is given in Algorithm \ref{alg:2}; see
Figure \ref{e}(a) .\par For constructing a longest path, parallel
algorithms can be easily developed for each different pattern shown
in  Figure \ref{e}(b), (c). As two examples, for constructing a
longest path between vertices $(2,1)$ and $(m-1,n)$ in an
odd$\times$odd rectangular grid graph $R(m,n)$, and vertices
$(m-1,1)$ and $(n-1,1) $ in an even$\times$even rectangular grid
graph $R(m,n)$. We have developed the simple algorithms Algorithm
\ref{alg:3} and \ref{alg:4}, respectively. The algorithms for other
patterns can be derived in the similar way. \par Then combining
phase is parallelized as follows.  The two processors at the two
endpoints of a corner edge in a Hamiltonian cycle $c_1$ check
whether a neighboring Hamiltonian cycle $c_2$ exists or not. If
$c_2$ exists, then their successors are changed to the adjacent
processors in $c_2$. Similarly, the two processors at the endpoints
of a corner edge in the longest path $P$ in $R_5$ also check the
existence of the adjacent edge in the Hamiltonian cycle $C$, and
change their successors. Thus, the combining phase can be
parallelized in constant steps without inter-processor
communication.

\begin{figure}[htp]
  \centering
  \includegraphics[scale=1]{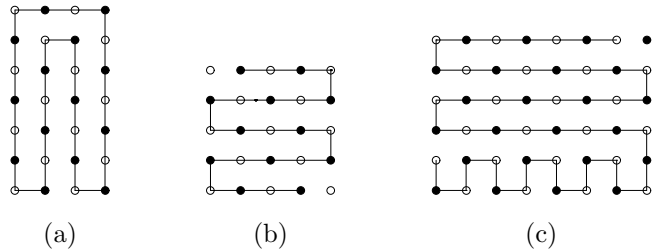}
  \caption[]%
  {(a) A Hamiltonian cycle in $R(4, 7)$, (b) and (c) two patterns of longest path in $R_5(5,5)$ and $R_5(8,6)$.}
  \label{e}
\end{figure}
\begin{algorithm}
\caption{The Hamiltonian cycle parallel algorithm for an even-sized
rectangular grid graphs} \label{alg:2}
\begin{algorithmic}
\PROC LongestPath$R(m,n)$
\end{algorithmic}
\begin{algorithmic}[1]
{\small \STATE \textbf{for} each processor $(x,y)$ in $R(m,n)$
\textbf{do} in parallel \STATE \textbf{if} $y=1$, \textbf{then}
successor $(x,y)$ $\leftarrow$ $(x+1,y)$ \STATE \textbf{elseif}
($y=2$, $x$ is odd and $x\neq 1$) or ($y=n$ and $x$ even),
\textbf{then} successor $(x,y)$ $\longleftarrow$ $(x-1,y)$\STATE
\textbf{slesif} $x$ is even and $y<n$, \textbf{then} successor
$(x,y)$ $\longleftarrow$ $(x,y+1)$
 \STATE \textbf{else} $x$ is odd and $y\leq n$, \textbf{then} successor $(x,y)$ $\longleftarrow$ $(x,y-1)$}
\end{algorithmic}
\end{algorithm}
\begin{algorithm}
\caption{The longest path parallel algorithm for odd$\times$odd
rectangular grid graphs} \label{alg:3}
\begin{algorithmic}
\PROC LongestPath$(R_5(m_5,n_5),s,t)$
\end{algorithmic}
\begin{algorithmic}[1]
{\small \STATE \textbf{for} each processor $(x,y)$ in $R_5(m_5,n_5)$
\textbf{do} in parallel \STATE \textbf{if} $x=1$ and $y=1$ or $x=m$
and $y=n$, \textbf{then} successor $(x,y)$ $\longleftarrow$ nill
\STATE \textbf{elseif} $y$ is odd and $x<m$, \textbf{then} successor
$(x,y)$ $\longleftarrow$ $(x+1,y)$\STATE \textbf{slesif} $y$ is odd
and $x=m$, \textbf{then} successor $(x,y)$ $\longleftarrow$
$(x,y+1)$\STATE \textbf{elseif} $y$ is even and $x>1$, \textbf{then}
successor $(x,y)$ $\longleftarrow$ $(x-1,y)$
 \STATE \textbf{else} $y$ is even and $x=1$, \textbf{then} successor $(x,y)$ $\longleftarrow$ $(x,y+1)$}
\end{algorithmic}
\end{algorithm}

\begin{algorithm}
\caption{The longest path parallel algorithm for even$\times$even
rectangular grid graphs} \label{alg:4}
\begin{algorithmic}
\PROC LongestPath$(R_5(m_5,n_5),s,t)$
\end{algorithmic}
\begin{algorithmic}[1]
{\small \STATE \textbf{for} each processor $(x,y)$ in $R_5(m_5,n_5)$
\textbf{do} in parallel \STATE \textbf{if} $x=m$ and $y=1$,
\textbf{then} successor $(x,y)$ $\longleftarrow$ nill \STATE
\textbf{elseif} ($y$ is odd and $x=m$), ($y$ is even and $x=1$) or
($y=n$ and $x$ is even)  \textbf{then} successor $(x,y)$
$\longleftarrow$ $(x,y-1)$\STATE \textbf{slesif} ($y$ is odd and
$x<m$), ($y=n-1$ and $x$ is even), ($y=n$ and $x$ is odd)
\textbf{then} successor $(x,y)$ $\longleftarrow$ $(x+1,y)$
 \STATE \textbf{elseif} $y$ is even and $x>1$ \textbf{then} successor $(x,y)$ $\longleftarrow$ $(x-1,y)$
 \STATE \textbf{elseif} $y=n-1$ and $x$ is odd \textbf{then} successor $(x,y)$ $\longleftarrow$ $(x,y+1)$
 }

\end{algorithmic}
\end{algorithm}

\section{Conclusion and future work} \label{ConclusionSect}
We presented a linear-time sequential algorithm for finding a longest path in a
rectangular grid graph between any two given vertices. Since the
longest path problem is NP-hard in general grid graphs
\cite{IPS:HPIGG}, it remains open if the problem is polynomially
solvable in solid grid
graphs. Based on the sequential algorithm a constant-time parallel algorithm
is introduced for the problem, which can be run on every parallel machine.\\[1cm]

\end{document}